\definecolor{MyBlue}{cmyk}{1,0.13,0,0.63}
\definecolor{MyGreen}{cmyk}{0.91,0,0.88,0.52}
\newcommand{\mylinkcolor}{MyBlue}
\newcommand{\mycitecolor}{MyGreen}
\newcommand{\myurlcolor}{webbrown}
\def\@endtheorem{\endtrivlist}
\theoremstyle{plain}
\newtheorem{thm}{Theorem}[section]
\newtheorem{prop}[thm]{Proposition}
\theoremstyle{definition}
\newtheorem{defn}[thm]{Definition}
\newtheorem{remark}[thm]{Remark}
\renewcommand{\eqref}[1]{\labelcref{#1}}
\crefname{thm}{Theorem}{Theorems}
\crefname{lem}{Lemma}{Lemmas}
\crefname{prop}{Proposition}{Propositions}
\crefname{coro}{Corollary}{Corollaries}
\crefname{defn}{Definition}{Definitions}
\crefname{example}{Example}{Examples}
\crefname{remark}{Remark}{Remarks}
\setlist{topsep=4pt plus 2pt minus 2pt,partopsep=0pt,itemsep=2pt plus 2pt minus 2pt,parsep=0.5\parskip}
\newcommand{\sD}{\slashed{D}}
\newcommand{\R}{\mathbb{R}}
\newcommand{\C}{\mathbb{C}}
\newcommand{\qH}{\mathbb{H}}
\newcommand{\Z}{\mathbb{Z}}
\newcommand{\A}{\mathcal{A}}
\newcommand{\mH}{\mathcal{H}}
\newcommand{\mK}{\mathcal{K}}
\newcommand{\mJ}{\mathcal{J}}
\newcommand{\D}{\mathcal{D}}
\newcommand{\G}{\mathcal{G}}
\newcommand{\B}{\mathcal{B}}
\newcommand{\F}{\mathcal{F}}
\newcommand{\mF}{\mathcal{F}}
\newcommand{\E}{\mathcal{E}}
\newcommand{\mL}{\mathcal{L}}
\newcommand{\mU}{\mathcal{U}}
\newcommand{\su}{\mathfrak{su}}
\DeclareMathOperator{\Dom}{Dom}
\DeclareMathOperator{\Ker}{Ker}
\DeclareMathOperator{\cc}{c.c.}
\DeclareMathOperator{\Pert}{Pert}
\renewcommand{\Re}{\mathop{\textnormal{Re}}}
\renewcommand{\Im}{\mathop{\textnormal{Im}}}
\renewcommand{\bar}[1]{\overline{#1}}
\newcommand{\Cliff}{{\mathrm{Cl}}}
\newcommand{\dvol}{\textnormal{dvol}}
\newcommand{\op}{\textnormal{op}}
\newcommand{\Sub}[1]{_{\scriptscriptstyle#1}}
\newcommand{\til}[1]{\widetilde{#1}}
\newcommand{\tilpi}{\tilde\pi}
\newcommand{\hotimes}{\mathbin{\hat\otimes}}
\newcommand{\hot}{\hotimes}
\newcommand{\la}{\langle}
\newcommand{\ra}{\rangle}
\newcommand{\into}{\hookrightarrow}
\newcommand{\mvert}{\,|\,}
\newcommand{\bundlefont}[1]{{\mathtt{#1}}}
\newcommand{\bS}{\bundlefont{S}}
\newcommand{\bE}{\bundlefont{E}}
\newcommand{\mattwo}[4]{
  \left(\!\!\!\begin{array}{c@{~}c}#1&#2\\ #3&#4\\\end{array}\!\!\!\right)
}
\newcommand{\vectwo}[2]{
  \left(\!\!\!\begin{array}{c}#1\\ #2\\\end{array}\!\!\!\right)
}
\newcommand{\matthree}[3]{
  \left(\!\!\begin{array}{c@{~}c@{~}c}#1\\ #2\\ #3\\\end{array}\!\!\right)
}
\newcommand{\matfour}[4]{
  \left(\!\!\begin{array}{c@{~}c@{~}c@{~}c}#1\\ #2\\ #3\\ #4\\\end{array}\!\!\right)
}
\title{Krein spectral triples and the fermionic action}
\author{
Koen van den Dungen$^{1,2,3}$%
\footnote{%
Electronic mail: \texttt{kdungen@sissa.it}
}\\[4mm]
{\normalsize ${}^1$Mathematical Sciences Institute, Australian National University}\\
{\normalsize Canberra, ACT 0200, Australia}\\[2mm]
{\normalsize ${}^2$School of Mathematics and Applied Statistics, University of Wollongong}\\
{\normalsize Wollongong, NSW 2522, Australia}\\[2mm]
{\normalsize ${}^3$\emph{Current address:} SISSA (Scuola Internazionale Superiore di Studi Avanzati)}\\ 
{\normalsize Via Bonomea, 265, 34136 Trieste, Italy}
}
\date{}
\begin{document}

\maketitle

\begin{abstract}
\noindent
Motivated by the space of spinors on a Lorentzian manifold, we define Krein spectral triples, which generalise spectral triples from Hilbert spaces to Krein spaces. This Krein space approach allows for an improved formulation of the fermionic action for almost-commutative manifolds. We show by explicit calculation that this action functional recovers the correct Lagrangians for the cases of 
electrodynamics, 
the electro-weak theory, and the Standard Model. 
The description of these examples does not require a real structure, unless one includes Majorana masses, in which case the internal spaces also exhibit a Krein space structure. 

\vspace{\baselineskip}\noindent
\emph{Keywords}: Lorentzian manifolds, noncommutative geometry, gauge theories.

\noindent
\emph{Mathematics Subject Classification 2010}: 53C50, 58B34, 70S15. 
\end{abstract}

\section{Introduction}

The framework of Connes' noncommutative geometry \cite{Connes94}, and in particular the special case of so-called almost-commutative manifolds \cite{ISS04}, 
can be used to derive physical models describing both gravity and (classical) gauge theory, thus providing a first step towards a unified theory. For a suitably chosen almost-commutative manifold, one obtains the full Standard Model of high energy physics, including the Higgs mechanism and neutrino mixing \cite{Con96,CCM07}. 
This unified description of gauge theory and gravity relies on two action functionals which allow to \emph{derive} the Lagrangian of the theory: the \emph{spectral action} \cite{CC97} and the \emph{fermionic action} \cite{Con06}. 
The
spectral action yields the bosonic part of the Lagrangian, while the fermionic action yields (of course) the fermionic part (including the interactions between fermions and bosons). 

In this article we will focus on the fermionic action. The usual fermionic action, given by Connes \cite{Con06}, is given for a real even spectral triple $(\A,\mH,\D,J,\gamma)$ of $KO$-dimension $2$ as
$$
S_f := \frac12 \langle J\til\xi \mid \D\til\xi\rangle ,
$$
where $\til\xi$ is a Grassmann variable corresponding to an even vector $\xi=\gamma\xi\in\mH^+$. 
While the Lagrangian obtained from this action closely resembles (term by term) the physical Lagrangian, 
there are still two discrepancies. First, the fermionic action is given in Riemannian signature, while physical spacetime has Lorentzian signature. Second, the fermionic action is defined by using the real structure (the `charge conjugation operator'), but (except for possible Majorana mass terms) the charge conjugation operator should not be present in the physical Lagrangian. In addition, the presence of this real structure also means that the fermionic action is not automatically real-valued. 

In this article we will consider a Lorentzian version of noncommutative geometry. The study of Lorentzian noncommutative geometry is still far from complete, although some progress has been made. While the spinors on a Riemannian manifold give rise to a Hilbert space, the spinors on a pseudo-Riemannian (e.g.\ Lorentzian) manifold naturally give rise to a Krein space instead. Thus, to obtain a physical description of spinor fields on spacetime, it is more natural to work with Krein spaces instead of Hilbert spaces. A Krein space approach to noncommutative geometry was first taken in \cite{Str06} and studied further in \cite{Sui04,PS06}. Building upon these previous works, we propose a definition of \emph{Krein spectral triples}, which offers a natural extension of the notion of spectral triples from Hilbert spaces to Krein spaces. 

Subsequently, we will describe a \emph{Lorentzian} alternative to the fermionic action, which we call the \emph{Krein action}. As observed by Barrett \cite{Bar07}, the Krein action can be chosen to simply take the form of the usual Dirac action, and is given by
$$
S_\mK[\psi] := \la\psi|\D\psi\ra ,
$$
where $\la\cdot|\cdot\ra$ denotes the indefinite inner product on a Krein space $\mH$, and $\D$ is an unbounded Krein-self-adjoint operator on $\mH$. Thus, this action  by construction allows for Lorentzian signatures, and it does not involve the charge conjugation.
Unlike Barrett, however, 
we avoid the introduction of anti-commuting Grassmann variables for the fermions, which we find more natural, as an almost-commutative manifold only describes a classical gauge theory (not a quantum theory). 
Another difference is that, when Majorana masses are included, our finite space also exhibits a Krein space structure (see Section \ref{sec:Majorana}).

In \cite{NW96}, a continuous Wick rotation for spinor fields has been introduced, which could be used to turn the Krein action into a Euclidean action. Although a detailed study of such Wick rotations falls beyond the scope of this article, we would like to point out that such a Wick rotation of our Krein action cannot correspond to Connes' fermionic action, due to the presence of the charge conjugation operator in the latter. 

Lastly, let us point out that our description of gauge theories in terms of Krein spectral triples is not yet complete, as we have no Krein alternative for the bosonic action. It is still very much an open question how the spectral action should be adapted to the Lorentzian setting.

The layout of this article is as follows. First, we describe in Section \ref{sec:krein_ST} the definition of Krein spectral triples and the Krein action. 
We develop the general formalism for describing gauge theories from almost-commutative Lorentzian manifolds in Sections \ref{sec:krein_gauge} and \ref{sec:krein_acm}. Subsequently, we will calculate the Krein action explicitly for the cases of 
electrodynamics (Section \ref{sec:krein_ED}), 
the electro-weak theory (Section \ref{sec:krein_EW}), and the Standard Model (Section \ref{sec:krein_SM}). 
In these examples we will see that (unless one wants Majorana masses) it is not necessary to include the `anti-particles' in the finite Hilbert space. Majorana masses can be added by including these `anti-particles' (i.e.\ by doubling the finite Hilbert space), and by turning this doubled space into a Krein space (see Section \ref{sec:Majorana}).

\subsection*{Acknowledgements}
The author wishes to thank Adam Rennie and Walter van Suijlekom for useful comments and discussions. The author also thanks the referee for comments and suggestions for improvement. 
The author acknowledges support from both the 
Australian National University and the University of Wollongong.

\section{Krein spectral triples}
\label{sec:krein_ST}

A Krein space is a vector space $\mH$ with a non-degenerate inner product $\la\cdot|\cdot\ra$ which admits a fundamental decomposition $\mH = \mH^+ \oplus \mH^-$ (i.e., an orthogonal direct sum decomposition into a positive-definite subspace $\mH^+$ and a negative-definite subspace $\mH^-$) such that $\mH^+$ and $\mH^-$ are intrinsically complete (i.e., complete with respect to the norms $\Vert \psi\Vert_{\mH^\pm} := |\la\psi|\psi\ra|^{1/2}$).

A fundamental symmetry $\mJ$ is a self-adjoint unitary operator $\mJ\colon\mH\to\mH$ such that $(1+\mJ)\mH$ is positive-definite and $(1-\mJ)\mH$ is negative-definite. Given a fundamental decomposition $\mH = \mH^+ \oplus \mH^-$, we obtain a corresponding fundamental symmetry $\mJ = P^+ - P^-$, where $P^\pm$ denotes the projection onto $\mH^\pm$. 
Given a fundamental symmetry $\mJ$, we denote by $\mH_\mJ$ the corresponding Hilbert space for the positive-definite inner product $\la\cdot|\cdot\ra_\mJ := \la\mJ\cdot|\cdot\ra$. 

For an operator $T$, we will denote by $T^+$ the \emph{Krein-adjoint} (i.e., the adjoint operator with respect to the Krein inner product $\la\cdot|\cdot\ra$). By the \emph{adjoint} $T^*$ we will mean the usual adjoint in the Hilbert space $\mH_\mJ$ (i.e., with respect to the positive-definite inner product $\la\cdot|\cdot\ra_\mJ$). These adjoints are related via $T^+ = \mJ T^* \mJ$. For a detailed introduction to Krein spaces, we refer to \cite{Bognar74}. 

\begin{defn}
A Krein space $\mH$ with fundamental symmetry $\mJ$ is called \emph{$\Z_2$-graded} if $\mH_\mJ$ is $\Z_2$-graded and $\mJ$ is homogeneous. 
\end{defn}

The assumption that $\mH_\mJ$ is $\Z_2$-graded 
means we have a decomposition $\mH^0\oplus\mH^1$, and that this decomposition is respected by the positive-definite inner product $\la\cdot|\cdot\ra_\mJ$ (which means that $\la\psi_0|\psi_1\ra_\mJ = 0$ for all $\psi_0\in\mH^0$ and $\psi_1\in\mH^1$). The bounded operators $\B(\mH)$ then also decompose into a direct sum of even operators $\B^0(\mH)$ and odd operators $\B^1(\mH)$. The assumption that the fundamental symmetry $\mJ$ is homogeneous means that $\mJ$ is either even or odd. If $\mJ$ is odd, it implements a unitary isomorphism $\mH^0\simeq\mH^1$. Given the decomposition $\mH^0\oplus\mH^1$, we have a (self-adjoint, unitary) grading operator $\Gamma$ which acts as $(-1)^j$ on $\mH^j$ (for $j\in\Z_2$). If $\mJ$ is odd, we note that $\Gamma$ is Krein-\emph{anti}-self-adjoint (indeed, $\Gamma^+ = \mJ\Gamma\mJ = -\Gamma\mJ^2 = -\Gamma$). 

As in \cite[\S2.1]{vdDR16} we define the `combined graph inner product' $\la\cdot|\cdot\ra_{S,T}$ of two closed operators $S$ and $T$ as $\la\psi|\phi\ra_{S,T} := \la\psi|\phi\ra_\mJ + \la S\psi|S\phi\ra_\mJ + \la T\psi|T\phi\ra_\mJ$ (using the \emph{positive-definite} inner product $\la\cdot|\cdot\ra_\mJ$), for all $\psi,\phi\in\Dom S\cap\Dom T$. This inner product yields the corresponding `combined graph norm' $\|\cdot\|_{S,T}$. 
For a Krein-self-adjoint operator $\D$ we have $\mJ\D^* = \D\mJ$ and $\Dom\D^* = \Dom\D\mJ = \mJ\cdot\Dom\D$. 
One can then check that $\la\cdot|\cdot\ra_{\D,\D^*}$ is identical to $\la\cdot|\cdot\ra_{\D\mJ,\mJ\D}$ on $\Dom\D\cap\Dom\D^* = \Dom\D\cap\mJ\cdot\Dom\D$. 

The following definition aims to adapt the notion of spectral triple to Krein spaces. Similar approaches for such an adaptation have been given in \cite{PS06,Str06}. Our definition is similar to the notion of indefinite spectral triple from \cite{vdDR16}, except for the addition of the fundamental symmetry, and the replacement of the conditions on $\Re\D$ and $\Im\D$ by the \emph{Krein}-self-adjointness of $\D$. 
We will focus only on the even case; the odd case is defined similarly by removing the $\Z_2$-grading. 

\begin{defn}
\label{defn:Krein_triple}
An even \emph{Krein spectral triple} $(\A,\mH,\D,\mJ)$ consists of 
\begin{itemize}
\item a $\Z_2$-graded Krein space $\mH$; 
\item a trivially graded $*$-algebra $\A$ along with an even $*$-algebra representation $\pi\colon\A\to B^0(\mH)$;  
\item a fundamental symmetry $\mJ$ (satisfying $\mJ^*=\mJ$ and $\mJ^2=1$) which commutes with the algebra $\A$ and which is either even or odd; 
\item a densely defined, closed, odd operator $\D\colon\Dom\D\to\mH$ such that:
\begin{enumerate}
\item the linear subspace $\E:=\Dom\D\cap\mJ\cdot\Dom\D$ is dense in $\mH$;
\item the operator $\D$ is Krein-self-adjoint on $\E$ (or, equivalently, the operator $\mJ\D\colon\Dom\D\to\mH_\mJ$ is self-adjoint);
\item we have the inclusion $\pi(\A)\cdot\E\subset\E$, and the commutator $[\D,\pi(a)]$ is bounded on $\E$ for each $a\in\A$;
\item the map $\pi(a)\circ\iota\colon \E\into \mH\to \mH$ is compact for each $a\in \A$, where $\iota$ denotes the natural inclusion map $\E\into \mH$, and $\E:=\Dom\D\cap\mJ\cdot\Dom\D$ is considered as a Hilbert space with the inner product $\la\cdot|\cdot\ra_{\D\mJ,\mJ\D}$.
\end{enumerate}
\end{itemize}
We say an even Krein spectral triple $(\A,\mH,\D,\mJ)$ is of \emph{Lorentz-type} when $\mJ$ is \emph{odd}.
\end{defn}

The assumption that the fundamental symmetry commutes with the algebra need perhaps not be necessary in the general noncommutative case, but it will be satisfied by all examples we consider. This assumption ensures that we have $\pi(a)^+=\pi(a)^*=\pi(a^*)$ for all $a\in\A$. 

We have aimed to give a `complete' definition for Krein spectral triples, which can be used as a framework for the study of Lorentzian noncommutative geometry. However, this framework will undoubtedly undergo further transformations, and the above definition may not be definitive. We point out that, for the description of the fermionic action, only conditions 2.\ and 3.\ for the operator $\D$ are relevant. Nevertheless, we will show that the almost-commutative manifolds constructed in Section \ref{sec:krein_acm} satisfy all conditions, also for non-compact manifolds (i.e., non-unital algebras). 

As follows from Proposition \ref{prop:mfd_trip} below, the assumption that $\mJ$ is odd actually just captures the fact that the number of time dimensions is odd; it does not necessarily imply that there is only one time dimension (as was noted already in \cite[page 5]{PS06}).

A \emph{quadratic form} on a Krein space $\mH$ is a sesquilinear map $q\colon\Dom q\times\Dom q\to\C$ (conjugate-linear in the first variable and linear in the second variable), where the \emph{form domain} $\Dom q$ is a dense linear subspace of $\mH$. If $q(\psi_1,\psi_2) = \bar{q(\psi_2,\psi_1)}$ for all $\psi_1,\psi_2\in\mH$ we say that $q$ is \emph{symmetric}. If $\mH = \mH^0\oplus\mH^1$ is $\Z_2$-graded (and we think of $\C$ as being trivially graded), then we say that $q$ is \emph{$\Z_2$-graded} if $q(\psi_0,\psi_1)=0$ for any $\psi_0\in\mH^0\cap\Dom q$ and $\psi_1\in\mH^1\cap\Dom q$. 

\begin{prop}
Let $(\A,\mH,\D,\mJ)$ be an even Krein spectral triple. Then 
\begin{align*}
\F(\psi_1,\psi_2) := \left\langle\psi_1|\D\psi_2\right\rangle = \left\langle \mJ\psi_1|\D\psi_2\right\rangle_{\mJ} 
\end{align*}
defines a symmetric quadratic form $\mF$ with form domain $\Dom\mF = \Dom\D$. Moreover, if the Krein spectral triple is of Lorentz-type, then $\F$ is $\Z_2$-graded. 
\end{prop}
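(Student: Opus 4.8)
The plan is to verify the three assertions in turn: that $\mF$ is a well-defined sesquilinear form on the dense domain $\Dom\D$, that it is symmetric, and that it is $\Z_2$-graded in the Lorentz-type case, drawing only on the defining properties of the Krein inner product and on conditions 1 and 2 of \cref{defn:Krein_triple}.

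I would first dispose of the easy points. The stated identity $\la\psi_1|\D\psi_2\ra = \la\mJ\psi_1|\D\psi_2\ra_\mJ$ is immediate from $\la\cdot|\cdot\ra_\mJ = \la\mJ\cdot|\cdot\ra$ together with $\mJ^2=1$. Sesquilinearity then follows because the Krein product is conjugate-linear in the first and linear in the second argument while $\D$ is linear; the form domain is $\Dom\D$, which is dense since $\D$ is densely defined, and for $\psi_1,\psi_2\in\Dom\D$ the quantity $\la\psi_1|\D\psi_2\ra$ is a well-defined scalar because $\D\psi_2\in\mH$.

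The substantive step is symmetry, and here I would invoke condition 2 in the equivalent form that $\mJ\D\colon\Dom\D\to\mH_\mJ$ is self-adjoint. In particular $\mJ\D$ is symmetric on $\Dom\D=\Dom(\mJ\D)$, so $\la\mJ\D\psi_1|\psi_2\ra_\mJ=\la\psi_1|\mJ\D\psi_2\ra_\mJ$ for all $\psi_1,\psi_2\in\Dom\D$. Translating both sides back to the Krein product — using $\la\mJ\cdot|\cdot\ra_\mJ=\la\cdot|\cdot\ra$ on the left, and on the right that $\mJ$ preserves the Krein product (since $\mJ^+=\mJ\mJ^*\mJ=\mJ$ and $\mJ^2=1$) — yields $\la\D\psi_1|\psi_2\ra=\la\psi_1|\D\psi_2\ra$. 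Conjugate symmetry $\overline{\la a|b\ra}=\la b|a\ra$ of the Krein product then gives $\mF(\psi_1,\psi_2)=\la\D\psi_1|\psi_2\ra=\overline{\la\psi_2|\D\psi_1\ra}=\overline{\mF(\psi_2,\psi_1)}$. This is the step I would treat most carefully: the only real pitfall is to remember that self-adjointness of $\mJ\D$ holds on all of $\Dom\D$ (not merely on $\E$), so that the symmetry relation is valid for every pair in the form domain.

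Finally, for the grading I would take $\psi_0\in\mH^0\cap\Dom\D$ and $\psi_1\in\mH^1\cap\Dom\D$ and write $\mF(\psi_0,\psi_1)=\la\mJ\psi_0|\D\psi_1\ra_\mJ$. Oddness of $\D$ gives $\D\psi_1\in\mH^0$, and in the Lorentz-type case oddness of $\mJ$ gives $\mJ\psi_0\in\mH^1$; since $\mH^0\oplus\mH^1$ is orthogonal for $\la\cdot|\cdot\ra_\mJ$, the pairing vanishes, and $\mF(\psi_1,\psi_0)=0$ follows by symmetry. I anticipate no genuine obstacle in this last part; it is worth noting only that it truly uses the oddness of $\mJ$, since for even $\mJ$ the vector $\mJ\psi_0$ would stay in $\mH^0$ and the pairing need not vanish — precisely why the grading is claimed only for Lorentz-type triples.
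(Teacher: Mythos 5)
Your proof is correct and follows essentially the same route as the paper: sesquilinearity is immediate, symmetry comes from the Krein-self-adjointness of $\D$ (which you spell out, somewhat more carefully than the paper does, via the equivalent formulation that $\mJ\D$ is self-adjoint on all of $\Dom\D$, so that the relation $\la\D\psi_1|\psi_2\ra=\la\psi_1|\D\psi_2\ra$ indeed holds on the whole form domain), and the grading claim follows from the oddness of $\D$ and $\mJ$. The only cosmetic difference is in the last step: the paper argues via the Krein-anti-self-adjointness of the grading operator $\Gamma$ (namely $\la\psi_0|\D\psi_1\ra = \la\Gamma\psi_0|\D\psi_1\ra = \cdots = -\la\psi_0|\D\psi_1\ra$), whereas you pass to $\mH_\mJ$ and use that $\mJ\psi_0\in\mH^1$ is $\la\cdot|\cdot\ra_\mJ$-orthogonal to $\D\psi_1\in\mH^0$ — two equivalent one-line computations resting on the same facts.
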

\begin{proof}
Sesquilinearity is immediate from the definition, and using the Krein-self-adjointness of $\D$ we also find symmetry: $\bar{\langle \psi_1|\D\psi_2\rangle} = \langle \D\psi_2|\psi_1\rangle = \langle \psi_2|\D\psi_1\rangle$.
If the triple is of Lorentz-type, then the grading operator $\Gamma$ is Krein-anti-self-adjoint. 
For $\psi_0\in\mH^0\cap\Dom\D$ and $\psi_1\in\mH^1\cap\Dom\D$ we then find that $\langle \psi_0|\D\psi_1\rangle = \langle \Gamma\psi_0|\D\psi_1\rangle = - \langle \psi_0|\Gamma \D\psi_1\rangle = \langle \psi_0|\D\Gamma\psi_1\rangle = - \langle \psi_0|\D\psi_1\rangle$. 
\end{proof}

\begin{defn}[Krein action]
\label{defn:Krein_action}
Let $(\A,\mH,\D,\mJ)$ be a Lorentz-type spectral triple. We define the \emph{Krein action} $S_\mK\colon\mH^0\to\C$ to be the functional
\begin{align*}
S_\mK[\psi] &:= \F(\psi,\psi) = \la\psi|\D\psi\ra . 
\end{align*}
\end{defn}
Since $\F$ is a symmetric quadratic form,  the Krein action $S_\mK[\psi]$ is automatically real-valued. 
Furthermore, we will show in 
Sections \ref{sec:krein_ED}, \ref{sec:krein_EW} and \ref{sec:krein_SM} 
that this Krein action recovers the correct (fermionic part of the) Lagrangians for 
electrodynamics, 
the electro-weak theory, and the Standard Model.

\section{Gauge theory}
\label{sec:krein_gauge}

In this section, we develop the abstract formalism for a description of gauge theories using Krein spectral triples. Later, we will apply this to the special case of almost-commutative manifolds. 

Let $\A$ be a trivially graded unital $*$-algebra. Denote by $\A^\op := \{ a^\op \mid a\in\A\}$ the \emph{opposite algebra} of $\A$, which equals $\A$ as a vector space but has the opposite product $a^\op b^\op = (ba)^\op$. 
Let $\mH$ be a $\Z_2$-graded Krein space with fundamental symmetry $\mJ$, and suppose we have two commuting even representations $\pi\colon \A\to\B^0(\mH)$ and $\pi^\op\colon\A^\op\to\B^0(\mH)$. For ease of notation, we will often simply write $a$ instead of $\pi(a)$ and $a^\op$ instead of $\pi^\op(a^\op)$. 
We obtain a representation of the algebraic tensor product $\A\odot\A^\op$ on $\mH$ by setting $\tilpi(a\otimes b^\op) := \pi(a)\pi^\op(b^\op)$. Now suppose that $(\A\odot\A^\op,\mH,\D,\mJ)$ is a Krein spectral triple. We say that this triple satisfies the \emph{order-one condition} if 
\begin{align}
\label{eq:order-one}
\big[\pi(a),[\D,\pi^\op(b^\op)]\big] = 0 
\end{align}
for all $a,b\in\A$. 
In the remainder of this section we consider an even Krein spectral triple $(\A\odot\A^\op,\mH,\D,\mJ)$. We will assume that this triple is \emph{unital}, which means that $\A$ is unital and that $\tilpi$ is unital. The examples we describe in Sections \ref{sec:krein_ED}-\ref{sec:krein_SM} all satisfy the order-one condition. However, the order-one condition can fail in other examples, such as the Pati-Salam model \cite{CCvS13b}, and therefore we will develop the abstract formalism below without assuming that the triple satisfies the order-one condition. 

\subsection{Inner perturbations}

We will now introduce fluctuations of the operator $\D$, which will give rise to gauge fields as well as scalar fields (the latter being interpreted as the Higgs field in the case of electroweak theory and the Standard Model, see Sections \ref{sec:krein_EW} and \ref{sec:krein_SM}). 
We adapt the approach described in \cite{CCvS13} to our Krein spectral triples. 

Let $\A$ be a trivially graded unital $*$-algebra, and consider the algebraic tensor product $\A\odot\A^\op$. For an element $A = \sum_j a_j\otimes b_j^\op \in \A\odot\A^\op$, we have an anti-linear involution $A\to\bar A$ given by $\bar{\sum_j a_j\otimes b_j^\op} := \sum b_j^*\otimes a_j^{*\op}$, with the properties $\bar{(\lambda A)} = \bar\lambda\, \bar A$, $\bar{\bar A} = A$, and $\bar{(AA')} = \bar{A}\,\bar{A'}$, for all $\lambda\in\C$ and $A,A'\in\A\odot\A^\op$. 
We say that $A = \sum_j a_j\otimes b_j^\op\in\A\odot\A^\op$ is \emph{real}\footnote{In \cite{CCvS13}, $\bar A$ (denoted $A^*$ therein) is called the adjoint of $A$, and real elements are called self-adjoint. However, since $A\mapsto\bar A$ is an involution instead of an anti-involution, we prefer to think of this map as a `real structure', which turns the complex algebra $\A\odot\A^\op$ into a ``real'' algebra (following terminology of Kasparov \cite{Kas80b}).} if $\bar{A} = A$, and we say that $A$ is \emph{normalised} if $\sum a_j b_j = 1\in\A$.
The 
properties of reality and normalisation are preserved by multiplication, so we can define the following. 

\begin{defn}[{\cite[\S III]{CCvS13}}]
The \emph{perturbation semi-group} $\Pert(\A)$ is the set of real normalised elements in $\A\odot\A^\op$, with the multiplication inherited from the algebra $\A\odot\A^\op$.
\end{defn}

Let $(\B,\mH,\D,\mJ)$ be a Krein spectral triple. 
We consider the \emph{generalised one-forms} given by
$$
\Omega_\D^1(\B) := \Bigl\{ \sum_j a_j[\D,b_j] \Bigm\vert a_j,b_j\in\B \Bigr\} ,
$$
where the sums must converge in norm. 
For the algebra $\B = \A\odot\A^\op$, we then consider a map $\eta_\D\colon\A\odot\A^\op\to\Omega^1_\D(\A\odot\A^\op)\subset \B(\mH)$ defined by
$$
\eta_\D\Big(\sum_j a_j\otimes b_j^\op\Big) := \sum_{j,k} \tilpi(a_j\otimes (a_k^*)^\op) \big[\D , \tilpi(b_j\otimes (b_k^*)^\op)\big] .
$$
As in \cite[Lemma 4.(ii)]{CCvS13}, this map $\eta_\D$ is involutive (i.e.\ $\eta_\D(\bar A) = \eta_\D(A)^+$).
In particular, $\eta_\D$ maps elements of $\Pert(\A)$ to Krein-self-adjoint elements in $\Omega^1_\D(\A\odot\A^\op)$.
If we have a Krein spectral triple $(\A\odot\A^\op,\mH,\D,\mJ)$ which satisfies the order-one condition \eqref{eq:order-one}, then the expression for $\eta_\D$ simplifies to
\begin{align*}
\eta_\D\Big(\sum_j a_j\otimes b_j^\op\Big) &= \sum_{j} a_j [\D,b_j] + \sum_{j} a_j^{*\op} [\D,b_j^{*\op}]  .
\end{align*}

\begin{defn}
By the \emph{fluctuation} of $\D$ by $A\in\Pert(\A)$ we mean the map
$$
\D \mapsto \D_A := \D + \eta_\D(A) ,
$$
and we refer to $\D_A$ as the \emph{fluctuated Dirac operator}.
\end{defn}

We point out that the map $\eta_\D$ is \emph{not} multiplicative and therefore it does not yield a representation of the semi-group $\Pert(\A)$ on $\mH$. Instead, we obtain an action of $\Pert(\A)$ on the space of fluctuated Dirac operators.
\begin{prop}[{\cite[Proposition 5.(ii)]{CCvS13}}]
A fluctuation of a fluctuated Dirac operator is again a fluctuated Dirac operator. To be precise: $(\D_A)_{A'} = \D_{A'A}$ for all perturbations $A,A'\in\Pert(\A)$. 
\end{prop}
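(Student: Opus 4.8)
The plan is to reduce the whole statement to a single clean ``sandwich'' formula for the fluctuated operator, after which the identity $(\D_A)_{A'}=\D_{A'A}$ becomes a purely algebraic check in $\A\odot\A^\op$. First I would prove that for any normalised $A=\sum_j a_j\otimes b_j^\op$ one has
\[
\D_A \;=\; \sum_{j,k}\tilpi\big(a_j\otimes(a_k^*)^\op\big)\,\D\,\tilpi\big(b_j\otimes(b_k^*)^\op\big).
\]
To see this I would expand the right-hand side with the Leibniz rule $\D\,\tilpi(x)=[\D,\tilpi(x)]+\tilpi(x)\D$: the commutator terms reproduce $\eta_\D(A)$ by definition, while the remaining terms collect into $\big(\sum_{j,k}\tilpi(a_j\otimes(a_k^*)^\op)\tilpi(b_j\otimes(b_k^*)^\op)\big)\D$. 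Since $\tilpi$ is an algebra homomorphism and $x^\op y^\op=(yx)^\op$, the bracketed operator equals $\tilpi\big((\sum_j a_jb_j)\otimes(\sum_k b_k^*a_k^*)^\op\big)$, and the normalisation condition $\sum_j a_jb_j=1$ together with $\sum_k b_k^*a_k^*=(\sum_k a_kb_k)^*=1$ makes this the identity. Hence this term is just $\D$ and the formula follows; notably it uses only normalisation, not reality.

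Next I would apply the sandwich formula twice. Writing $A'=\sum_i a_i'\otimes b_i'^\op$ and substituting the sandwich expression for $\D_A$ into the sandwich expression for $(\D_A)_{A'}$, I would merge the two outer and two inner factors using the homomorphism property of $\tilpi$ and the opposite product, obtaining
\[
(\D_A)_{A'}=\sum_{i,j,k,l}\tilpi\big(a_i'a_j\otimes((a_l'a_k)^*)^\op\big)\,\D\,\tilpi\big(b_jb_i'\otimes((b_kb_l')^*)^\op\big).
\]
On the other side I would compute the semi-group product $A'A=\sum_{i,j}a_i'a_j\otimes(b_jb_i')^\op$, index its terms by the pairs $(i,j)$, and apply the sandwich formula once to $\D_{A'A}$, letting the second summation run over a second copy $(l,k)$ of these pairs. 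Reading off the components $a_i'a_j$, $b_jb_i'$ and their starred counterparts produces exactly the same four-fold sum, so $(\D_A)_{A'}=\D_{A'A}$.

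The hard part will be bookkeeping rather than analysis: one must track the opposite-algebra products carefully, since each factor in the second tensor slot reverses order and each involution turns, e.g., $a_k^*a_l'^*$ into $(a_l'a_k)^*$, and one must check that the quadratic index structure of the twice-applied formula matches the structure coming from $A'A$. I would also dispatch two routine points. First, that $A'A\in\Pert(\A)$, so that $\D_{A'A}$ is genuinely a fluctuation by an element of the perturbation semi-group; this is the preservation of reality and normalisation under multiplication noted above, and is where reality enters (normalisation alone already suffices for the operator identity itself). Second, that all manipulations are legitimate as identities of operators on the common dense domain $\Dom\D$: since $\eta_\D(A)$ and $\eta_{\D_A}(A')$ are bounded and $\tilpi(\A\odot\A^\op)$ preserves $\Dom\D$, the operator $\eta_{\D_A}(A')$ is well defined and the sandwich derivation applies verbatim with $\D_A$ in place of $\D$.
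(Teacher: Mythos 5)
Your proof is correct. The paper itself supplies no argument for this proposition --- it defers entirely to the citation \cite[Proposition 5.(ii)]{CCvS13} --- and your proof is essentially the one given there, adapted correctly to this paper's double-sum form of $\eta_\D$: the sandwich formula $\D_A=\sum_{j,k}\tilpi\big(a_j\otimes(a_k^*)^\op\big)\,\D\,\tilpi\big(b_j\otimes(b_k^*)^\op\big)$ (which, as you note, needs only normalisation and unitality of $\tilpi$, not reality) reduces the semigroup property $(\D_A)_{A'}=\D_{A'A}$ to the purely algebraic index bookkeeping you carry out.
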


\subsection{Gauge action}

The unitary group $\mU(\A)$ maps to the perturbation semi-group $\Pert(\A)$ via the semi-group homomorphism $\Delta\colon\mU(\A)\to\Pert(\A)$ given by $u \mapsto u\otimes (u^*)^\op$. 
This then yields an obvious action of $u\in\mU(\A)$ on $\Pert(\A)$ given by multiplication with $\Delta(u)$. To be precise, for $A = \sum_j a_j\otimes b_j^\op \in \Pert(\A)$, the action of $u\in\mU(\A)$ is given by
$$
\Delta(u) A = \sum_j ua_j\otimes (u^*)^\op b_j^\op = \sum_j ua_j\otimes (b_ju^*)^\op . 
$$
Since $\Pert(\A)\subset\A\odot\A^\op$, we can compose $\Delta$ with the $*$-algebra representation $\tilpi$ to obtain a group representation 
$$
\rho := \tilpi\circ\Delta \colon \mU(\A) \to \B(\mH) .
$$

\begin{defn}
We define the \emph{gauge group} as 
$$
\G(\A) := \big\{ \rho(u) \mid u\in\mU(\A) \big\} \simeq \mU(\A) / \Ker\rho .
$$
\end{defn}

We also consider an action $\gamma$ of the unitary group $\mU(\A)$ on $\Omega^1_\D(\A\odot\A^\op)$. 
For $T\in\Omega^1_\D(\A\odot\A^\op)$ and $u\in\mU(\A)$, this action is given by 
$$
\gamma_u(T) := \rho(u) T \rho(u^*) + \eta_\D\circ\Delta(u) = \rho(u) T \rho(u^*) + \rho(u) [\D,\rho(u^*)] .
$$ 
We point out that this is the usual transformation of gauge potentials under the gauge group $\G(\A)$. 
The map $\eta_\D$ is covariant with respect to the actions by $\mU(\A)$ (cf.\ \cite[Lemma 4.(iii)]{CCvS13}). To be precise, $\gamma_u\circ\eta_\D(A) = \eta_\D(\Delta(u)A)$ for all $u\in\mU(\A)$.

As mentioned above, we have an action of $\G(\A)$ on $\mH$. We can also define an action of $\G(\A)$ on the space of fluctuated Dirac operators. For $\rho(u)\in\G(\A)$, this action is given by $\D_A\mapsto\D_{\Delta(u)A}$. 
If $u\in\Ker\rho$, then $\eta_\D(\Delta(u)A) = \eta_\D(A)$ and hence $\D_{\Delta(u)A} = \D_A$. Therefore this action of $\rho(u)$ on $\D_A$ is independent of the choice of representative $u\in\mU(\A)$ for $\rho(u)\in\G(\A)$. 

\begin{prop}
The Krein action $S_\mK[\psi,A] := \la\psi|\D_A\psi\ra$ of the fluctuated Dirac operator $\D_A$ is invariant under the action of the gauge group given by $\psi\mapsto\rho(u)\psi$ and $A \mapsto \Delta(u)A$. 
\end{prop}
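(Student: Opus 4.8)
The plan is to reduce the stated invariance to two facts: that $\rho(u)$ is a Krein-unitary operator, and that the fluctuated Dirac operator transforms by conjugation, $\D_{\Delta(u)A} = \rho(u)\,\D_A\,\rho(u)^+$. Granting these, the conclusion is a one-line computation: writing $\rho(u)^+ = \rho(u)^{-1}$ and using Krein-unitarity,
\begin{align*}
S_\mK[\rho(u)\psi,\Delta(u)A] = \la\rho(u)\psi\,|\,\rho(u)\D_A\rho(u)^+\rho(u)\psi\ra = \la\rho(u)^+\rho(u)\psi\,|\,\D_A\psi\ra = \la\psi\,|\,\D_A\psi\ra = S_\mK[\psi,A].
\end{align*}
So the work is entirely in establishing the two ingredients.

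First I would verify that $\rho(u)$ is Krein-unitary. Recall $\rho = \tilpi\circ\Delta$, that $\tilpi$ is an even $*$-representation of $\A\odot\A^\op$, and that (since $\mJ$ commutes with the algebra) $\tilpi(x)^+ = \tilpi(x^*)$ for every $x\in\A\odot\A^\op$. Computing with the $*$-structure on $\A\odot\A^\op$ gives $\Delta(u)^* = (u\otimes(u^*)^\op)^* = u^*\otimes u^\op = \Delta(u^*)$, whence $\rho(u)^+ = \tilpi(\Delta(u)^*) = \rho(u^*)$. As $\rho$ is a group homomorphism and the triple is unital, $\rho(u)^+\rho(u) = \rho(u^*u) = \rho(1) = 1$ and similarly $\rho(u)\rho(u)^+ = 1$, so $\rho(u)^+ = \rho(u^*) = \rho(u)^{-1}$.

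Next I would establish the conjugation identity from the covariance of $\eta_\D$. Applying $\gamma_u\circ\eta_\D(A) = \eta_\D(\Delta(u)A)$ together with the explicit formula $\gamma_u(T) = \rho(u)T\rho(u^*) + \rho(u)[\D,\rho(u^*)]$, and using $\rho(u)\rho(u^*) = 1$ to rewrite $\rho(u)[\D,\rho(u^*)] = \rho(u)\D\rho(u^*) - \D$, I find
\begin{align*}
\eta_\D(\Delta(u)A) = \rho(u)\,\eta_\D(A)\,\rho(u^*) + \rho(u)\,\D\,\rho(u^*) - \D.
\end{align*}
Adding $\D$ to both sides then gives $\D_{\Delta(u)A} = \D + \eta_\D(\Delta(u)A) = \rho(u)\big(\D + \eta_\D(A)\big)\rho(u^*) = \rho(u)\,\D_A\,\rho(u)^+$, as required.

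The main obstacle is analytic rather than algebraic, since $\D$ is unbounded: I would need to check that the domains behave correctly, i.e.\ that conjugation by the bounded, boundedly-invertible operator $\rho(u)$ sends $\Dom\D_A$ onto $\Dom\D_{\Delta(u)A}$ so that the operator identity $\D_{\Delta(u)A} = \rho(u)\D_A\rho(u)^+$ is literally valid (not merely on a core), and that the rearrangement of the Krein inner product above is justified — which it is, because $\rho(u)$ is bounded and the $\eta_\D(A)$ are bounded perturbations of $\D$. Since $\rho(u)$ is moreover even (a product of operators in $\B^0(\mH)$), it preserves the grading, so $\psi\mapsto\rho(u)\psi$ keeps the argument within the domain $\mH^0$ of the Krein action and the statement is well-posed. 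I expect this domain bookkeeping to be the only delicate point; the algebra reduces cleanly to the covariance of $\eta_\D$ and the Krein-unitarity of $\rho(u)$.
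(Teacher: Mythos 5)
Your proposal is correct and follows essentially the same route as the paper: covariance of $\eta_\D$ under the gauge action yields $\D_{\Delta(u)A} = \rho(u)\D_A\rho(u^*)$, and the Krein-unitarity of $\rho(u)$ (which, exactly as in your derivation, rests on $\mJ$ commuting with the algebra) then gives the invariance in one line. Your additional care about domain bookkeeping for the unbounded operator is a reasonable point the paper leaves implicit, but it does not change the argument.
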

\begin{proof}
Since $\eta_\D$ is covariant under the action of the gauge group, we find
\begin{align*}
\D_{\Delta(u)A} &= \D + \eta_\D(\Delta(u)A) 
= \D + \gamma_u\circ\eta_\D(A) 
= \D + \rho(u) \eta_\D(A) \rho(u^*) + \rho(u) [\D,\rho(u^*)] \\
&= \rho(u) \big( \D + \eta_\D(A) \big) \rho(u^*)
= \rho(u) \D_A \rho(u^*) . 
\end{align*}
Since the unitary $\rho(u)$ commutes with $\mJ$ (so $\rho(u)$ is also unitary for the Krein inner product), we find
\begin{align*}
S_\mK[\rho(u)\psi,\Delta(u)A] &= \big\la \rho(u)\psi \mvert \D_{\Delta(u)A}\rho(u)\psi \big\ra 
= \big\la \rho(u)\psi \mvert \rho(u)\D_A\psi \big\ra
= \la\psi \mvert \D_A\psi\ra = S_\mK[\psi,A] . \qedhere 
\end{align*}
\end{proof}

\section{Almost-commutative manifolds}
\label{sec:krein_acm}

A \emph{finite space} $F := (\A_F,\mH_F,\D_F,\mJ_F)$ with $\Z_2$-grading $\Gamma_F$ is an even Krein spectral triple for which $\mH_F$ is finite-dimensional. An almost-commutative manifold is then constructed as the product of a finite space with a pseudo-Riemannian spin manifold $M$.
Let us first have a closer look at the Krein spectral triple corresponding to such a manifold. 

Let $(M,g)$ be an $n$-dimensional time- and space-oriented pseudo-Rieman\-nian spin manifold of signature $(t,s)$, where $t$ is the number of time dimensions (for which $g$ is negative-definite) and $s$ is the number of spatial dimensions (for which $g$ is positive-definite). We consider an orthogonal decomposition of the tangent bundle $TM = \bE_t\oplus \bE_s$, which always exists but is far from unique. We will consider elements of $\bE_t$ to be `purely timelike' and elements of $\bE_s$ to be `purely spacelike'. Given our choice of decomposition $TM = \bE_t\oplus \bE_s$, we have a \emph{timelike projection} $T\colon \bE_t\oplus \bE_s \to \bE_t$ and a \emph{spacelike reflection} $r := 1-2T$ which acts as $(-1)\oplus1$ on $\bE_t\oplus \bE_s$. 
Using the spacelike reflection $r$, we can define a `Wick rotated' metric $g_r$ on $M$ by setting
$$
g_r(v,w) := g(rv,w)
$$
for all $v,w\in TM$. One readily checks that $g_r$ is positive-definite, and hence $(M,g_r)$ is a Riemannian manifold.

Let $\Cliff(TM,g)$ denote the real Clifford algebra with respect to $g$, and denote the Clifford representation $TM \into \Cliff(TM,g)$ by $\gamma$. Our conventions are such that $\gamma(v)\gamma(w) + \gamma(w)\gamma(v) = - 2 g(v,w)$. 
We shall denote by $h$ the map $T^*M\rightarrow TM$ which maps $\alpha\in T^*M$ to its dual in $TM$ with respect to the metric $g$. That is:
\begin{align*}
h(\alpha) &= v   \Longleftrightarrow   \alpha(w) = g(v,w) \quad\text{for all } w\in TM .
\end{align*}
We assume that $M$ is equipped with a spin structure. 
We consider the corresponding spinor bundle $\bS\rightarrow M$ 
and its space of compactly supported, smooth sections 
$\Gamma_c^\infty(\bS)$. We denote by $c$ the pseudo-Riemannian 
Clifford multiplication $\Gamma_c^\infty(T^*M\otimes \bS)\rightarrow \Gamma_c^\infty(\bS)$ 
given by 
$
c(\alpha\otimes\psi) := \gamma\big(h(\alpha)\big)\psi .
$
Let $\nabla$ be the Levi-Civita connection for the pseudo-Riemannian 
metric $g$, and let $\nabla^\bS$ be its lift to the spinor bundle. 
The Dirac operator on $\Gamma_c^\infty(\bS)$ is defined as the composition
\begin{align*}
\sD &\colon \Gamma_c^\infty(\bS) \xrightarrow{\nabla^\bS} \Gamma_c^\infty(T^*M\otimes \bS) \xrightarrow{c} \Gamma_c^\infty(\bS) . 
\end{align*}
Locally, we can choose a (pseudo-)orthonormal frame $\{e_j\}_{j=1}^n$ 
corresponding to our choice of decomposition $TM = \bE_t\oplus \bE_s$, 
such that $e_j\in \bE_t$ for $j\leq t$ and $e_j\in \bE_s$ for $j>t$. 
In terms of this frame, the metric can be written as
\begin{align*}
g(e_i,e_j) &= \delta_{ij}\kappa(j) , & \kappa(j) &= 
\begin{cases}-1 & j=1,\ldots,t;\\ 1 & j=t+1,\ldots,n. \end{cases}
\end{align*}
Let $\{\theta^i\}_{i=1}^n$ be the basis of $T^*M$ dual to $\{e_j\}_{j=1}^n$, 
so that $\theta^i(e_j) = \delta^i_j$. We then see that $h(\theta^j) = \kappa(j)e_j$. 
In terms of the local frame $\{e_j\}$, we can then write the Dirac operator as
$$
\sD := c \circ \nabla^\bS = \sum_{j=1}^n \kappa(j) \gamma(e_j) \nabla^\bS_{e_j} .
$$
Given the decomposition $TM = \bE_t\oplus \bE_s$, there exists a 
positive-definite hermitian structure \cite[\S3.3.1]{Baum81}
$$
(\cdot|\cdot)_{\mJ_M}\colon \Gamma_c^\infty(\bS) \times \Gamma_c^\infty(\bS) \to C_c^\infty(M) ,
$$
which gives rise to the inner product
$
\la\psi_1|\psi_2\ra_{\mJ_M} := \int_M (\psi_1|\psi_2)_{\mJ_M} \dvol_g ,
$
for all $\psi_1,\psi_2\in\Gamma_c^\infty(\bS)$, where $\dvol_g$ 
denotes the canonical volume form of $(M,g)$. The completion of 
$\Gamma_c^\infty(\bS)$ with respect to this inner product is denoted 
$L^2(\bS)$. 
We can define an operator $\mJ_M$ on $L^2(\bS)$ by setting
$$
\mJ_M := i^{t(t-1)/2} \gamma(e_1) \cdots \gamma(e_t) , 
$$
where $\{e_j\}$ is a local orthonormal frame corresponding to the 
decomposition $TM = \bE_t\oplus \bE_s$. This operator is self-adjoint 
and unitary, and is related to the spacelike reflection $r$ via
$
\mJ_M\gamma(v)\mJ_M = (-1)^t \gamma(rv) .
$
The space $L^2(\bS)$ then becomes a Krein space with the indefinite inner product $\la\cdot|\cdot\ra := \la\mJ_M\cdot|\cdot\ra_{\mJ_M}$ and with fundamental symmetry $\mJ_M$. This indefinite inner product $\la\cdot|\cdot\ra$ is independent of the choice of decomposition $TM = \bE_t\oplus \bE_s$. 

\begin{prop}
\label{prop:mfd_trip}
Let $(M,g)$ be an $n$-dimensional time- and space-oriented pseudo-Riemannian spin manifold of signature $(t,s)$. 
Let $r$ be a spacelike reflection, such that the associated Riemannian metric $g_r$ is complete. 
Then we obtain an even Krein spectral triple 
$$
\left( C_c^\infty(M) , L^2(\bS) , i^t\sD , \mJ_M \right) ,
$$
with grading operator $\Gamma_M$. If $t$ is odd, the triple is a Lorentz-type spectral triple. 
\end{prop}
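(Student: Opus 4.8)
The plan is to verify the five requirements of \cref{defn:Krein_triple} for $(C_c^\infty(M),L^2(\bS),i^t\sD,\mJ_M)$ in turn, dispatching the structural items quickly and concentrating on the two analytic conditions, both of which I would reduce to the geometry of the complete Riemannian manifold $(M,g_r)$. All the ingredients of the Krein space $(L^2(\bS),\mJ_M)$ have already been assembled before the proposition, so it remains to record the grading and check the operator conditions. For the $\Z_2$-grading I would take $\Gamma_M$ to be the chirality operator built from the oriented volume element. Since $\Gamma_M$ anticommutes with each $\gamma(e_j)$, the operator $i^t\sD$ is odd, while $\mJ_M$ --- a product of the $t$ timelike generators --- satisfies $\Gamma_M\mJ_M=(-1)^t\mJ_M\Gamma_M$ and is therefore homogeneous, even when $t$ is even and odd when $t$ is odd; the odd case is exactly the Lorentz-type situation. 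Pointwise multiplication gives an even $*$-representation of $C_c^\infty(M)$ that commutes with $\mJ_M$, the latter acting fibrewise by Clifford multiplication while functions act as scalars. I take $\D$ to be the closure of $i^t\sD$ on the core $\Gamma_c^\infty(\bS)$; this space lies in $\E$, is $\mJ_M$-invariant, and is dense, which settles condition~1.

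Condition~2 is the analytic heart. A formal integration by parts on $\Gamma_c^\infty(\bS)$, using the compatibility of $\nabla^\bS$ with the indefinite inner product together with the Clifford relations, shows that $i^t\sD$ is Krein-symmetric, the phase $i^t$ being exactly what is needed to offset the $t$ timelike directions; equivalently $\mJ_M\D$ is symmetric on $\mH_\mJ$. To promote symmetry to essential self-adjointness I would \emph{not} appeal to elliptic regularity: a direct symbol computation gives $\sigma_{\mJ_M\D}(\xi)\propto \mJ_M\gamma\big(h(\xi)\big)$, which is \emph{not} invertible on the $g$-null cone, so $\mJ_M\D$ is a genuinely non-elliptic (hyperbolic-type) operator. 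Instead I would observe that $\mJ_M\D$ is a first-order differential operator on the \emph{complete} Riemannian manifold $(M,g_r)$ with finite propagation speed (its symbol is a fibrewise endomorphism depending linearly on $\xi$, hence bounded on the $g_r$-unit cosphere), and invoke the classical theorem of Chernoff, that a symmetric first-order operator of finite propagation speed on a complete Riemannian manifold is essentially self-adjoint (the pseudo-Riemannian case of this is treated in \cite{Baum81,vdDR16}). \textbf{This is the step I expect to be the main obstacle}, precisely because the Lorentzian signature destroys ellipticity and forces one to use finite-propagation-speed arguments --- and hence the completeness of $g_r$ --- in place of the standard elliptic self-adjointness theorem.

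Condition~3 is routine: the principal symbol computation gives $[\D,f]=i^t c(df)=i^t\gamma\big(h(df)\big)$, a bounded fibrewise Clifford multiplication for $f\in C_c^\infty(M)$, and boundedness of this commutator forces multiplication by $f$ to preserve $\Dom\D$, hence (as $f$ commutes with $\mJ_M$) also $\E$.

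For condition~4 the essential point is that, although neither $\mJ_M\D$ nor $\D\mJ_M$ is individually elliptic, the \emph{pair} is jointly elliptic. Indeed the combined symbol $\sigma_{\mJ_M\D}(\xi)^*\sigma_{\mJ_M\D}(\xi)+\sigma_{\D\mJ_M}(\xi)^*\sigma_{\D\mJ_M}(\xi)$ is positive semidefinite, and its kernel equals $\Ker\gamma\big(h(\xi)\big)\cap\Ker\gamma\big(rh(\xi)\big)$; this intersection is trivial for $\xi\neq0$, since $g\big(h(\xi),rh(\xi)\big)=g_r\big(h(\xi),h(\xi)\big)>0$ together with the Clifford relation forces any common null-kernel spinor to vanish. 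Consequently the combined graph inner product $\la\cdot|\cdot\ra_{\D\mJ,\mJ\D}$ dominates the local Sobolev $H^1(g_r)$-norm, so for $f$ supported in a compact set $K$ the map $\pi(f)\circ\iota$ factors through the Rellich--Kondrachov compact inclusion $H^1(K)\into L^2(K)$ and is therefore compact. This verifies the final condition and completes the proposed proof.
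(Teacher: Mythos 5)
Your proposal is correct, and its skeleton (verify the four conditions of \cref{defn:Krein_triple}, with density and the commutator condition dispatched quickly and the two analytic conditions reduced to the complete Riemannian metric $g_r$) matches the paper; the difference lies in how the two analytic conditions are discharged. For Krein-self-adjointness the paper simply cites \cite[Satz 3.19]{Baum81}, whereas you invoke Chernoff's finite-propagation-speed theorem directly --- this is exactly the mechanism behind Baum's result, so here you are supplying the proof that the paper outsources, and your diagnosis that hyperbolicity (non-ellipticity) of $\mJ_M\D$ is the crux is the right one. For local compactness the routes genuinely diverge in presentation: the paper forms the Wick rotations $\sD_\pm := \frac12(\sD+\sD^*)\mp\frac i2(\sD-\sD^*)$, cites \cite[Proposition 4.2]{vdDR16} for their ellipticity and \cite[Proposition 10.5.2]{Higson-Roe00} for local compactness of their resolvents, and transfers this to $\E$ via the graph-norm identity $\|\cdot\|_{\D,\D^*}=\|\cdot\|_{\D_+,\D_-}$ of \cite[Lemma 2.3]{vdDR16} (combined with the identification $\la\cdot|\cdot\ra_{\D,\D^*}=\la\cdot|\cdot\ra_{\D\mJ,\mJ\D}$ noted in Section \ref{sec:krein_ST}); you instead prove joint ellipticity of the overdetermined pair $(\mJ\D,\D\mJ)$ by the symbol computation $\Ker\gamma\big(h(\xi)\big)\cap\Ker\gamma\big(rh(\xi)\big)=0$ (indeed the combined symbol works out to $2g_r\big(h(\xi),h(\xi)\big)$ times the identity) and then apply G{\aa}rding plus Rellich--Kondrachov. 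These are two packagings of the same underlying fact --- the ellipticity of $\sD_\pm$ in \cite{vdDR16} rests on precisely your symbol computation --- so what your version buys is self-containedness and an explicit display of where $g_r$-positivity enters, at the cost of having to justify (as the paper implicitly does via its references) that the G{\aa}rding-type estimate applies to all of $\E$, i.e.\ to elements of the closures of the operators rather than only to smooth compactly supported sections; this is standard interior elliptic regularity for overdetermined systems, but it deserves a sentence. One further small point shared with the paper: the existence of a grading operator $\Gamma_M$ anticommuting with Clifford multiplication requires $n$ even, which the proposition statement leaves implicit and both you and the paper assume.
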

\begin{proof}
We need to check that $i^t\sD$ satisfies the conditions of Definition \ref{defn:Krein_triple}. 
For condition 1.\ we note that the linear subspace $\E := \Dom \sD \cap \mJ_M\cdot\Dom\sD$ contains $\Gamma_c^\infty(\bS)$ and is therefore dense in $\mH$. 
Condition 2.\ follows from 
\cite[Satz 3.19]{Baum81}. 
Condition 3.\ follows because $\sD$ is a first-order differential operator and the algebra is smooth. 

To show the local compactness of the inclusion map $\E\into L^2(\bS)$, we consider the `Wick rotations' $\sD_\pm := \frac12(\sD+\sD^*) \mp \frac i2(\sD-\sD^*)$. We know from (the proof of) \cite[Proposition 4.2]{vdDR16} that $\sD_\pm$ are elliptic, and hence they have locally compact resolvents (see e.g.\ \cite[Proposition 10.5.2]{Higson-Roe00}).
The equality of the combined graph norms $\|\cdot\|_{\D,\D^*} = \|\cdot\|_{\D_+,\D_-}$ (see \cite[Lemma 2.3]{vdDR16}) shows that the inclusion $\E \into \Dom\sD_\pm$ is bounded, and therefore the composition $\E \into \Dom\sD_\pm \into L^2(\bS)$ is locally compact. 

Since $M$ is even-dimensional, we have 
a grading operator $\Gamma_M$ which satisfies the relation $\Gamma_M \mJ_M = (-1)^t \mJ_M \Gamma_M$ 
(see e.g.\ \cite[Ch.\ 1]{LM89}).  
This implies that we have a Lorentz-type spectral triple if and only if $t$ is odd. 
\end{proof}

Let $\mH$ be a $\Z_2$-graded Krein space. For a homogeneous element $\psi\in\mH^j$ ($j=0,1$), we write $|\psi|=j$ for the degree of $\psi$. Similarly, for a homogeneous operator $T$ on $\mH$, we have $|T|=0$ if $T$ is even, and $|T|=1$ if $T$ is odd. 

Given two $\Z_2$-graded Krein spaces $\mH_1$ and $\mH_2$, we define their graded tensor product $\mH_1\hot\mH_2$ as the vector space $\mH_1\otimes\mH_2$ with the inner product 
$$
\la \psi_1\hot\psi_2 | \phi_1\hot\phi_2 \ra := \la\psi_1|\phi_1\ra \la\psi_2|\phi_2\ra ,
$$
for $\psi_1,\phi_1\in\mH_1$ and $\psi_2,\phi_2\in\mH_2$, and with the grading given by $(\mH_1\hot\mH_2)^0 := (\mH_1^0\hot\mH_2^0) \oplus (\mH_1^1\hot\mH_2^1)$ and $(\mH_1\hot\mH_2)^1 := (\mH_1^0\hot\mH_2^1) \oplus (\mH_1^1\hot\mH_2^0)$. Given homogeneous operators $T_1$ on $\mH_1$ and $T_2$ on $\mH_2$, their tensor product acts on $\mH_1\hot\mH_2$ as
$$
(T_1\hot T_2) (\psi_1\hot\psi_2) := (-1)^{|T_2||\psi_1|} T_1\psi_1 \hot T_2\psi_2 ,
$$
for homogeneous elements $\psi_1\in\mH_1$ and $\psi_2\in\mH_2$. 
The adjoint of $T_1\hot T_2$ is given by
$$
(T_1\hot T_2)^* = (-1)^{|T_1||T_2|} T_1^*\hot T_2^* .
$$

\begin{defn}
Let $(M,g)$ be an even-dimensional pseudo-Riemannian spin manifold as in Proposition \ref{prop:mfd_trip}. 
An \emph{almost-commutative pseudo-Riemannian manifold} $F\times M$ is the product of a finite space $F$ with 
the manifold $M$, 
given by
\begin{multline*}
F\times M := (\A,\mH,\D,\mJ) := 
\left( C_c^\infty(M,\A_F) , \mH_F\hot L^2(\bS) , i^{|\mJ_F|}\hot i^t\sD + i^{|\mJ_M|}\D_F\hot1 , i^{|\mJ_F|\cdot|\mJ_M|}\mJ_F\hot\mJ_M \right) ,
\end{multline*}
equipped with the grading operator $\Gamma := \Gamma_F\hot\Gamma_M$. 
\end{defn}

As in \cite{BvdD14}, 
we construct almost-commutative manifolds as $F\times M$ instead of $M\times F$ (though the latter is more common in the literature). The reason is that the order $F \times M$ is more natural for the generalisation to the globally non-trivial case and its description as a Kasparov product (see \cite[\S III.C]{BvdD14}). 
Here we have also written the product in terms of the \emph{graded} tensor product $\hot$, thus avoiding explicit use of the grading operators.

\begin{prop}
\label{prop:acm_trip}
An almost-commutative pseudo-Riemannian manifold is an even Krein spectral triple. 
\end{prop}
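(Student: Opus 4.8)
The plan is to verify, one condition at a time, that the quadruple $(\A,\mH,\D,\mJ)$ of \cref{defn:Krein_triple} is satisfied, treating $F\times M$ as a graded tensor product of the finite space $F$ and the manifold triple $(C_c^\infty(M),L^2(\bS),i^t\sD,\mJ_M)$ supplied by \cref{prop:mfd_trip}. I would first dispose of the `static' data. That $\mH=\mH_F\hot L^2(\bS)$ is a $\Z_2$-graded Krein space is immediate from the definitions of the graded tensor-product inner product and grading. The homogeneity of $\mJ=i^{|\mJ_F||\mJ_M|}\mJ_F\hot\mJ_M$ is automatic (its degree is $|\mJ_F|+|\mJ_M|\bmod 2$), and the power of $i$ is precisely the correction needed to compensate the sign in $(T_1\hot T_2)^*=(-1)^{|T_1||T_2|}T_1^*\hot T_2^*$: using $\mJ_F^*=\mJ_F$, $\mJ_M^*=\mJ_M$ one finds $\mJ^*=\mJ$, and using $\mJ_F^2=\mJ_M^2=1$ together with $(\mJ_F\hot\mJ_M)^2=(-1)^{|\mJ_F||\mJ_M|}(1\hot1)$ one finds $\mJ^2=1$; the required fundamental-decomposition property then reduces to that of the two factor symmetries $\mJ_F$ and $\mJ_M$. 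Since $\mJ_F$ commutes with $\A_F$ and $\mJ_M$ commutes with multiplication operators, $\mJ$ commutes with the fibrewise representation $\pi$ of $\A=C_c^\infty(M,\A_F)$; as $\A_F$ is trivially graded and represented evenly and the functions act as even multiplication operators, $\pi$ is an even $*$-homomorphism into $\B^0(\mH)$.

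Next I would address the conditions on $\D=i^{|\mJ_F|}\hot i^t\sD+i^{|\mJ_M|}\D_F\hot1$. Condition~1 (density of $\E$) holds because $\E$ contains $\mH_F\odot\Gamma_c^\infty(\bS)$ (note $\mJ_M$ is a Clifford multiplication and so preserves $\Gamma_c^\infty(\bS)$), which is dense in $\mH$. For condition~3 I would split $[\D,\pi(a)]$ into its two summands: the commutator with $i^{|\mJ_F|}\hot i^t\sD$ yields the bounded Clifford multiplication by the derivatives of $a$, while the commutator with $i^{|\mJ_M|}\D_F\hot1$ yields the fibrewise commutator $[\D_F,a(x)]$, which is bounded because $\mH_F$ is finite-dimensional. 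For condition~4, the manifold triple has locally compact resolvent (established in the proof of \cref{prop:mfd_trip}); tensoring with the finite-dimensional factor $\mH_F$ preserves local compactness, and the compact support of $a$ then reduces the compactness of $\pi(a)\circ\iota$ to the manifold case already treated.

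The main obstacle is condition~2, the Krein-self-adjointness of $\D$, and this is where the powers of $i$ in the definition earn their keep. The key simplification is that, because $\mH_F$ is finite-dimensional, the term $i^{|\mJ_M|}\D_F\hot1$ is a \emph{bounded} operator, so the problem reduces to (i) showing the unbounded term $i^{|\mJ_F|}\hot i^t\sD$ is Krein-self-adjoint with respect to $\mJ$, and (ii) checking the bounded term is Krein-self-adjoint. For both I would use the equivalence noted after \cref{defn:Krein_triple}, that $T$ is Krein-self-adjoint iff $\mJ T$ is self-adjoint on $\mH_\mJ$, compute the relevant adjoints via $(T_1\hot T_2)^*=(-1)^{|T_1||T_2|}T_1^*\hot T_2^*$, and feed in the Krein-self-adjointness of $\D_F$ (part of $F$ being a Krein spectral triple) and of $i^t\sD$ (from \cref{prop:mfd_trip}); the prefactors $i^{|\mJ_F|}$ and $i^{|\mJ_M|}$ are exactly what is needed to cancel the graded signs so that each $\mJ\cdot(\text{term})$ is genuinely self-adjoint. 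Finally, since $\mJ\D$ is then the sum of a self-adjoint unbounded operator and a bounded self-adjoint operator, the Kato--Rellich theorem (with relative bound $0$) gives self-adjointness of $\mJ\D$ on the unchanged domain, i.e.\ Krein-self-adjointness of $\D$. I expect the only genuinely delicate bookkeeping to be the sign- and $i$-power tracking in steps (i) and (ii); everything else is a matter of assembling the two factor triples.
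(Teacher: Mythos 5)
Your proposal is correct and follows essentially the same route as the paper's (much terser) proof: the powers of $i$ are exactly the corrections for the Koszul signs making $\mJ$ self-adjoint and unitary and $\D$ Krein-symmetric, Krein-self-adjointness of $\D$ follows from that of $i^t\sD$ plus boundedness of $\D_F\hot 1$ (your Kato--Rellich step is just this bounded-perturbation argument made explicit), and the remaining conditions reduce to the manifold triple of \cref{prop:mfd_trip}. You have simply filled in the details that the paper dismisses with ``the other properties of $\D$ follow straightforwardly from the properties of $\sD$ and $\D_F$.''
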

\begin{proof}
The factor $i^{|\mJ_F|\cdot|\mJ_M|}$ in front of $\mJ_F\hot\mJ_M$ ensures that $\mJ$ is again self-adjoint and unitary. 
Furthermore, $\mJ$ commutes with the algebra $\A$ because $\mJ_F$ and $\mJ_M$ both commute with the algebra. Since $\mJ_F$ and $\mJ_M$ are homogeneous, $\mJ$ is also homogeneous with degree $|\mJ| = |\mJ_F|+|\mJ_M|$. 
The factors $i^{|\mJ_F|}$ and $i^{|\mJ_M|}$ before $1\hot i^t\sD$ and $\D_F\otimes1$ (respectively) ensure that $\D$ is again Krein-symmetric. The Krein-self-adjointness of $\D$ then follows from Krein-self-adjointness of $i^t\sD$ and boundedness of $\D_F$. 
The other properties of $\D$ follow straightforwardly from the properties of $\sD$ and $\D_F$. 
\end{proof}

Our typical example will be an almost-commutative manifold constructed from an even-dimensional Lorentzian manifold, which is of course of Lorentz-type (for which $\mJ_M$ is odd). In order to be able to apply the Krein action, we need this almost-commutative manifold to be of Lorentz-type as well, which means that the finite space should not be of Lorentz-type. Hence we impose the restriction that $\mJ_F$ is even. The almost-commutative manifold is then of the form
\begin{align}
\label{eq:ac-Lor}
F \times M := \left( C_c^\infty(M,\A_F) , \mH_F\hot L^2(\bS) , 1\hot i^t\sD + i\D_F\hot1 , \mJ_F\hot\mJ_M \right) .
\end{align}

\section{Electrodynamics}
\label{sec:krein_ED}

As a first example, we will calculate the Krein action for electrodynamics. The model of electrodynamics was first studied in the context of noncommutative geometry in \cite{vdDvS13}. Here, we take a slightly different approach, since we have no need for a real structure, and we can therefore reduce the dimension of the Hilbert space by a factor $2$. 

We consider the algebra $\A_F = \C\oplus\C$, and the even finite space
$$
F\Sub{ED} := \left( \A_F \odot \A_F^\op , \mH_F = \C^2 , \D_F = \mattwo{0}{-im}{im}{0} , \mJ_F = 1 \right) .
$$
We denote the standard basis of $\mH_F$ as $\{e_R,e_L\}$, where $e_R$ is odd and $e_L$ is even. 
Since $\A_F$ is commutative, we have $\A_F^\op\simeq \A_F = \C\oplus\C$. 
We consider the representations $\pi,\pi^\op\colon\C\oplus\C\to\B^0(\mH_F)$ given by
\begin{align*}
\pi(\lambda,\mu) &:= \lambda , & \pi^\op(\lambda,\mu) &:= \mu ,
\end{align*}
for $(\lambda,\mu)\in\C\oplus\C$, which gives the representation $\tilpi((\lambda,\mu)\otimes(\lambda',\mu')) = \lambda\mu'$ of $\A_F\odot \A_F^\op$ on $\mH_F$. We also note that these representations obviously satisfy the order-one condition \eqref{eq:order-one}. 
Since we have set $\mJ_F=1$, this finite space is in fact an ordinary finite spectral triple, and hence also a Krein spectral triple which is not of Lorentz-type. 

\begin{prop}
The gauge group of the finite space $F\Sub{ED}$ equals $\G(F\Sub{ED}) = U(1)$. 
\end{prop}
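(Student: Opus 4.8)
The plan is to compute $\G(F\Sub{ED})$ straight from its definition $\G(\A_F) = \{\rho(u)\mid u\in\mU(\A_F)\}$, where $\rho=\tilpi\circ\Delta$. First I would identify the unitary group: since $\A_F = \C\oplus\C$, we have $\mU(\A_F) = U(1)\times U(1)$, consisting of pairs $u = (\alpha,\beta)$ with $|\alpha|=|\beta|=1$.

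Next I would evaluate $\rho(u)$ explicitly. By definition $\Delta(u) = u\otimes(u^*)^\op = (\alpha,\beta)\otimes(\bar\alpha,\bar\beta)^\op$, and applying $\tilpi$ via the given formula $\tilpi\big((\lambda,\mu)\otimes(\lambda',\mu')\big) = \lambda\mu'$ yields $\rho(u) = \alpha\bar\beta$, acting as a scalar on $\mH_F = \C^2$.

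From here the conclusion is immediate. As $(\alpha,\beta)$ ranges over $U(1)\times U(1)$, the scalar $\alpha\bar\beta$ ranges over all of $U(1)$, so the image $\G(F\Sub{ED}) = \{\alpha\bar\beta\mid\alpha,\beta\in U(1)\}$ equals $U(1)$. Equivalently, one reads off $\Ker\rho = \{(\alpha,\alpha)\mid\alpha\in U(1)\}$, the diagonal copy of $U(1)$, whence $\G(F\Sub{ED})\simeq\mU(\A_F)/\Ker\rho \simeq (U(1)\times U(1))/U(1)\simeq U(1)$.

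There is no serious obstacle here; the computation is routine. The only point requiring care is the appearance of the complex conjugate in the second tensor factor of $\Delta(u)$: it is the \emph{difference} of the two phases, $\alpha\bar\beta$, rather than their product, that survives under $\tilpi$. This reflects the role of the opposite algebra $\A_F^\op$, and is precisely what produces a single overall $U(1)$ electromagnetic phase out of the two independent unitaries of $\C\oplus\C$.
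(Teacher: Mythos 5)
Your proposal is correct and follows essentially the same route as the paper: identify $\mU(\A_F)=U(1)\times U(1)$, observe that $\rho(u)$ only sees the combination $\alpha\bar\beta$ so that $\Ker\rho$ is the diagonal $U(1)$, and conclude $\G(F\Sub{ED})=\mU(\A_F)/\Ker\rho\simeq U(1)$. The only difference is that you write out $\rho(u)=\alpha\bar\beta$ explicitly before reading off the kernel, which the paper leaves implicit.
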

\begin{proof}
We have $\mU(\A_F) = U(1)\times U(1)$, and the kernel of the representation $\rho\colon\mU(\A_F)\to\B(\mH_F)$ equals $\Ker\rho = \{ (\lambda,\lambda)\in\mU(\A_F) \mid \lambda\in U(1) \} \simeq U(1)$, which yields for the quotient $\G(F\Sub{ED}) = \mU(\A_F) / \Ker\rho \simeq U(1)$. 
\end{proof}

Let $(M,g)$ be an even-dimensional pseudo-Riemannian spin manifold as in Proposition \ref{prop:mfd_trip}, for which $t$ is odd, and consider the corresponding almost-com\-mut\-ative manifold as in Eq.\ \ref{eq:ac-Lor}:
$$
F\Sub{ED} \times M := \left( C_c^\infty(M,\A_F\odot \A_F^\op) , \mH_F\hot L^2(\bS) , 1\hot i^t\sD + i\D_F\hot1 , 1\hot\mJ_M \right) .
$$

In the following, we will use local coordinates $x^\mu$ ($\mu=1,\ldots,n=t+s$) to write the Dirac operator as $\sD = \gamma^\mu \nabla^\bS_\mu$ for the `gamma matrices' $\gamma^\mu := \gamma(dx^\mu)$ and the spin connection $\nabla^\bS_\mu = \nabla^\bS_{\partial_\mu}$. 

We consider a perturbation $A\in\Pert(C_c^\infty(M,\A_F))$. Since the algebra $\A:= C_c^\infty(M,\A_F)$ is commutative, $\A^\op\simeq\A$ and we can write $A = \sum a_j\otimes b_j$ for $a_j=(\lambda_j,\mu_j)$ and $b_j=(\lambda_j',\mu_j')$ in $\A$. 
Since the order-one condition is satisfied, the expression for $\eta_\D$ simplifies and takes the form
$$
\eta_\D(A) = \sum_j \lambda_j [i^t\sD,\lambda_j'] + \sum_j \mu_j [i^t\sD,\mu_j'] =: A_\mu \hot i^t \gamma^\mu ,
$$
where we have used that $\D_F$ commutes with the algebra elements, and we have defined $A_\mu := \sum_j (\lambda_j\partial_\mu\lambda_j' + \mu_j\partial_\mu\mu_j') \in C_c^\infty(M)$. Since $A$ is real and $\eta_\D$ is involutive (cf.\ \cite[Lemma 4.(ii)]{CCvS13}), we know 
that $\eta_\D(A)$ is Krein-self-adjoint. Since $i^t\gamma^\mu$ is Krein-anti-symmetric, $A_\mu$ must also be Krein-anti-symmetric, and hence $A_\mu \in C_c^\infty(M,i\R)$. 
We consider the corresponding fluctuated Dirac operator given by
\begin{align}
\label{eq:fluc_Dirac_ED}
\D_A := 1 \hot i^t\sD + i\D_F \hot 1 + A_\mu \hot i^t\gamma^\mu .
\end{align}

The almost-commutative manifold $F\Sub{ED}\times M$ is of Lorentz-type, and hence we can apply Definition \ref{defn:Krein_action} for the Krein action. An arbitrary vector $\xi\in\mH^0 = \mH_F^0\hotimes L^2(\bS)^0 \oplus \mH_F^1\hotimes L^2(\bS)^1$ can be written uniquely as 
\begin{align}
\label{eq:right_fermion_ED}
\xi = e_R\hot\psi_R + e_L\hot\psi_L ,
\end{align}
for Weyl spinors $\psi_L\in L^2(\bS)^0$ and $\psi_R\in L^2(\bS)^1$. Note that the vector $\xi\in\mH^0$ is therefore completely determined by one Dirac spinor $\psi := \psi_L + \psi_R$. 

\begin{prop}
\label{prop:alt_fermion_act_ED}
The Krein action for $F\Sub{ED}\times M$ is given by
\begin{align*}
S\Sub{ED}[\psi,A] &= \big\langle \psi \mvert (i^t(\sD + \gamma^\mu A_\mu) - m)\psi\big\rangle .
\end{align*}
\end{prop}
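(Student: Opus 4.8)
The plan is to insert the fluctuated Dirac operator \eqref{eq:fluc_Dirac_ED} and the vector \eqref{eq:right_fermion_ED} into $S_\mK[\psi,A]=\la\xi|\D_A\xi\ra$ and evaluate directly. First I would compute $\D_A\xi$ one summand at a time, using the graded tensor product rule $(T_1\hot T_2)(\psi_1\hot\psi_2)=(-1)^{|T_2||\psi_1|}T_1\psi_1\hot T_2\psi_2$. For the kinetic summand $1\hot i^t\sD$ and the gauge summand (which acts as the identity on $\mH_F$ and as $A_\mu i^t\gamma^\mu$ on $L^2(\bS)$), the relevant operator on $L^2(\bS)$ is odd, so acting on the odd vector $e_R$ produces a Koszul sign while acting on the even vector $e_L$ does not. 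For the mass summand $i\D_F\hot1$, I would use that $\D_F$ is off-diagonal in $\{e_R,e_L\}$: explicitly $i\D_F e_R=-m\,e_L$ and $i\D_F e_L=m\,e_R$, so this term exchanges the two chiral sectors.

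Next I would form the Krein inner product with $\xi$. Since $\mJ_F=1$, the inner product on $\mH_F$ is the standard one and $\{e_R,e_L\}$ is orthonormal; hence only pairings diagonal in the finite factor survive, reducing $S_\mK$ to a sum of Krein inner products on $L^2(\bS)$ between the Weyl components $\psi_L,\psi_R$. The decisive structural input is the proposition of Section \ref{sec:krein_ST} stating that, for a Lorentz-type triple, $\F$ is $\Z_2$-graded; equivalently, the Krein inner product on $L^2(\bS)$ vanishes between vectors of equal parity. I would apply this to discard cross terms: in the kinetic and gauge contributions the pairings $\la\psi_L|i^t\sD\psi_R\ra$ and $\la\psi_R|i^t\sD\psi_L\ra$ (and likewise with $\gamma^\mu A_\mu$) pair equal-parity vectors and vanish, while in the mass contribution $\la\psi_L|\psi_L\ra$ and $\la\psi_R|\psi_R\ra$ vanish for the same reason.

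What survives should then be reassembled into Dirac-spinor form. The kinetic and gauge pieces leave $\la\psi_L|i^t(\sD+\gamma^\mu A_\mu)\psi_L\ra+\la\psi_R|i^t(\sD+\gamma^\mu A_\mu)\psi_R\ra$, and since the two omitted equal-parity cross terms are zero this equals $\la\psi|i^t(\sD+\gamma^\mu A_\mu)\psi\ra$ for $\psi=\psi_L+\psi_R$; the off-diagonal mass piece leaves $-m\la\psi_L|\psi_R\ra-m\la\psi_R|\psi_L\ra=-m\la\psi|\psi\ra$. Adding the two gives the stated expression.

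I expect the only genuine difficulty to be the sign bookkeeping. The Koszul signs generated by the graded tensor product (both in the operator action on $e_R$ and in the pairing of odd vectors) must combine so that the $\psi_L$ and $\psi_R$ kinetic and gauge terms appear with the \emph{same} sign --- otherwise they would not recombine into a single Dirac spinor --- and so that the off-diagonal $\D_F$ contribution produces the mass term with the correct overall sign rather than an antisymmetric combination. Checking this relies on the precise $\hot$-conventions together with the facts that $i^t\sD$ is Krein-self-adjoint while $i^t\gamma^\mu$ is Krein-anti-self-adjoint; the remainder is routine substitution.
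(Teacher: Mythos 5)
Your outline coincides with the paper's: substitute \eqref{eq:right_fermion_ED} and \eqref{eq:fluc_Dirac_ED} into the action, use orthonormality of $\{e_R,e_L\}$ to reduce to the spinor factor, discard the parity-forbidden pairings, and recombine into a single Dirac spinor; your identification of which pairings survive and which vanish is also correct. The gap is exactly at the point you flag and then defer as ``routine'': the sign bookkeeping. The compensating sign you invoke from ``the pairing of odd vectors'' does not exist under the paper's stated conventions, since the inner product on $\mH_1\hot\mH_2$ is defined \emph{without} Koszul signs, $\la \psi_1\hot\psi_2 | \phi_1\hot\phi_2 \ra = \la\psi_1|\phi_1\ra \la\psi_2|\phi_2\ra$. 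Consequently, if you pair $\xi$ directly against $\D_A\xi$ in the Krein inner product, the Koszul minus produced by the odd operators $1\hot i^t\sD$ and $A_\mu\hot i^t\gamma^\mu$ acting on the odd vector $e_R\hot\psi_R$ is left uncompensated, and your computation yields
\begin{align*}
\la\xi\mvert\D_A\xi\ra = \la\psi_L \mvert i^t(\sD+\gamma^\mu A_\mu)\psi_L\ra - \la\psi_R \mvert i^t(\sD+\gamma^\mu A_\mu)\psi_R\ra + m\la\psi_R|\psi_L\ra - m\la\psi_L|\psi_R\ra ,
\end{align*}
a chirally asymmetric kinetic/gauge term together with an antisymmetric (purely imaginary) mass combination. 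This is not the claimed action, and no appeal to the Krein-(anti-)self-adjointness of $i^t\sD$ and $i^t\gamma^\mu$ repairs it.

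The paper's proof avoids this precisely by never pairing against $\xi$ directly: it computes $\la\mJ\xi\mvert\D_A\xi\ra_\mJ$ with $\mJ = 1\hot\mJ_M$. Because $\mJ_M$ is \emph{odd}, the vector $\mJ\xi = -e_R\hot\mJ_M\psi_R + e_L\hot\mJ_M\psi_L$ carries its own Koszul minus on the $e_R$-component, which cancels the minus in $\D_A\xi$; both chiral sectors then enter with the same sign, and the Hilbert-space orthogonality of $L^2(\bS)^0$ and $L^2(\bS)^1$ (the counterpart of your appeal to the $\Z_2$-gradedness of $\F$; the two mechanisms are equivalent here) reassembles everything into $\la\mJ_M\psi \mvert (i^t(\sD+\gamma^\mu A_\mu)-m)\psi\ra_{\mJ_M}$. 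To complete your argument you must either take this detour through the fundamental symmetry, or replace the paper's unsigned $\hot$-pairing by a graded (signed) one and verify that this is the convention under which $1\hot\mJ_M$ is actually a fundamental symmetry --- a convention the paper does not state. As written, ``the Koszul signs must combine'' is the conclusion you need, not an argument for it, and under the conventions you cite they do not combine.
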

\begin{proof}
We need to calculate the inner product $\la\mJ\xi|\D_A\xi\ra_\mJ$, where $\xi$ is given as in Eq.\ \ref{eq:right_fermion_ED}. First, for $\mJ = 1\hot\mJ_M$ we calculate
$$
\mJ\xi = - e_R\hot\mJ_M\psi_R + e_L\hot\mJ_M\psi_L .
$$
For the fluctuated Dirac operator $\D_A$ of Eq.\ \ref{eq:fluc_Dirac_ED} we find
\begin{align*}
\D_A\xi &= - e_R\hot i^t\sD\psi_R + e_L\hot i^t\sD\psi_L - m e_L\hot\psi_R + m e_R\hot\psi_L - A_\mu e_R\hot i^t\gamma^\mu\psi_R + A_\mu e_L\hot i^t\gamma^\mu\psi_L \\
&= - e_R \hot \big( i^t\sD\psi_R - m \psi_L + i^t\gamma^\mu A_\mu \psi_R \big) + e_L \hot \big( i^t\sD\psi_L - m \psi_R + i^t\gamma^\mu A_\mu \psi_L \big) .
\end{align*}
Taking the inner product of $\mJ\xi$ with $\sD_A\xi$, and using the orthogonality of $L^2(\bS)^0$ and $L^2(\bS)^1$, we obtain
\begin{align*}
\la\mJ\xi \mvert \D_A\xi\ra_{\mJ} 
&= \big\la\mJ_M\psi \,\big|\, i^t\sD\psi - m \psi + i^t\gamma^\mu A_\mu \psi \big\ra_{\mJ_M} . \qedhere
\end{align*}
\end{proof}

\begin{remark}
Let us consider the above result for the usual case of a $4$-dimen\-sional time- and space-oriented Lorentzian manifold $M$ of signature $(1,3)$. A choice of decomposition $TM = \bE_t\oplus\bE_s$ determines a unit timelike vector field $e_0\in\Gamma(\bE_t)$, because $\bE_t$ is oriented and has rank $1$. The fundamental symmetry then equals $\mJ_M = \gamma(e_0)$, and (using standard physics notation) we will write the (indefinite) inner product as $\la\psi|\phi\ra = \int_M \bar\psi\phi \, \dvol_g$, where $\bar\psi = \psi^\dagger\gamma(e_0)$ is the Dirac adjoint of $\psi$. We can then rewrite the Krein action 
as
$
S\Sub{ED}[\psi,A] = \int_M \mL\Sub{ED}[\psi,A] \,\dvol_g ,
$
where the Lagrangian 
is given by
$$
\mL\Sub{ED}[\psi,A] := \bar\psi \big(i\gamma^\mu(\nabla_\mu+A_\mu) - m\big)\psi .
$$
This is indeed \emph{precisely} the usual (fermionic part of the) Lagrangian for electrodynamics (compare, for instance, \cite[\S4.1]{Peskin-Schroeder95}).
\end{remark}

\section{The electro-weak theory}
\label{sec:krein_EW}

In this section we will describe the electro-weak interactions between leptons (i.e., neutrinos and electrons). The description given here is largely an adaptation of \cite[\S5]{vdDvS12}. 

Consider the finite-dimensional Hilbert space $\mH_F := \mH_R\oplus\mH_L$, where $\mH_R = \mH_L = \C^2$. 
This Hilbert space is $\Z_2$-graded with even part $\mH_F^0 = \mH_L$ and odd part $\mH_F^1 = \mH_R$. We denote the basis of $\mH_F$ by $\{\nu_R,e_R,\nu_L,e_L\}$, where the elements $\nu_R,\nu_L$ describe the (right- and left-handed) neutrinos, and $e_R,e_L$ describe the electrons. 

We consider the (real) algebra $\A_F=\C\oplus\qH$, along with two even (real-linear) representations $\pi\colon \A_F \to \B(\mH_R)\oplus \B(\mH_L)$ and $\pi^\op\colon \A_F^\op \to \B(\mH_R)\oplus \B(\mH_L)$ given by
\begin{align*}
\pi(\lambda,q) &:= q_\lambda \oplus q := \mattwo{\lambda}{0}{0}{\bar\lambda} \oplus \mattwo{\alpha}{\beta}{-\bar\beta}{\bar\alpha} , & 
\pi^\op\big((\lambda,q)^\op\big) &:= \lambda\oplus\lambda ,
\end{align*}
for $\lambda\in\C$ and $q = \alpha + \beta j \in\qH$. 
The representation $\tilpi := \pi\otimes\pi^\op$ of $\A_F\odot \A_F^\op$ on $\mH_R\oplus\mH_L$ is then given by 
$$
\tilpi((\lambda,q)\otimes(\lambda',q')^\op) = \lambda'q_\lambda \oplus \lambda'q .
$$
We define the \emph{mass matrix} on the basis $\{\nu_R,e_R,\nu_L,e_L\}$ as
$$
\D_F := \matfour{0&0&-im_\nu&0}{0&0&0&-im_e}{im_\nu&0&0&0}{0&im_e&0&0} .
$$
We then consider the even finite space $F\Sub{EW} := \left( \A_F , \mH_F , \D_F , \mJ_F = 1 \right)$.

\begin{prop}
\label{prop:gauge_gp_EW}
The gauge group of $F\Sub{EW}$ equals 
$$
\G(F\Sub{EW}) = \big(U(1) \times SU(2)\big) / \Z_2 .
$$ 
\end{prop}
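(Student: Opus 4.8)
The plan is to compute $\Ker\rho$ directly, since by the definition of the gauge group $\G(F\Sub{EW}) \simeq \mU(\A_F)/\Ker\rho$, where $\rho = \tilpi\circ\Delta$. I would begin by identifying $\mU(\A_F)$ for $\A_F=\C\oplus\qH$: the unitaries of $\C$ form $U(1)$, and the unit quaternions form $SU(2)$ under the identification $q=\alpha+\beta j\mapsto\mattwo{\alpha}{\beta}{-\bar\beta}{\bar\alpha}$, so that $\mU(\A_F)=U(1)\times SU(2)$.

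Next I would write $\rho(u)$ in block form on $\mH_R\oplus\mH_L$. For $u=(\lambda,q)$ we have $\Delta(u)=(\lambda,q)\otimes(\bar\lambda,q^*)^\op$, so the formula for $\tilpi$ gives $\rho(u)=\bar\lambda\,q_\lambda\oplus\bar\lambda\,q$. Using $\bar\lambda\lambda=1$, the block acting on $\mH_R$ simplifies to $\mattwo{1}{0}{0}{\bar\lambda^2}$, while the block acting on $\mH_L$ is $\bar\lambda q$.

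Finally I would solve $\rho(u)=\id$. The $\mH_R$-block forces $\bar\lambda^2=1$, hence $\lambda=\pm1$; the $\mH_L$-block forces $\bar\lambda q=I_2$, i.e.\ $q=\lambda$ as a quaternion, which (as $\lambda$ is then real) admits only $q=\lambda\in\{\pm1\}$. Thus the only solutions are $(1,1)$ and $(-1,-1)$, giving $\Ker\rho=\{(1,1),(-1,-1)\}\cong\Z_2$, and the quotient is $\G(F\Sub{EW})=(U(1)\times SU(2))/\Z_2$. The one step needing genuine care is this kernel computation: one must check that $\bar\lambda q=I_2$ links $\lambda$ and $q$, so that the kernel is the \emph{diagonal} copy $\{(\pm1,\pm1)\}\cong\Z_2$ rather than a product of two separate $\Z_2$ factors or some larger subgroup. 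Everything else is routine once $\rho$ is expressed in block-diagonal form.
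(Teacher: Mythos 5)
Your proof is correct and follows essentially the same route as the paper: identify $\mU(\A_F)=U(1)\times SU(2)$, compute $\Ker\rho$ for $\rho=\tilpi\circ\Delta$, and take the quotient. The paper simply states $\Ker\rho=\{(\pm1,\pm1)\}\simeq\Z_2$ without detail, whereas you carry out the block computation $\rho(\lambda,q)=\mattwo{1}{0}{0}{\bar\lambda^2}\oplus\bar\lambda q$ explicitly and correctly verify that the kernel is the diagonal $\Z_2$.
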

\begin{proof}
We have $\mU(\A_F) = U(1)\times SU(2)$. The kernel of $\rho = \tilpi\circ\Delta\colon\mU(\A_F)\to\B(\mH_F)$ equals $\Ker\rho = \{ (\pm1,\pm1)\in\mU(\A_F) \} \simeq \Z_2$. The quotient $\G(F\Sub{ED}) = \mU(\A_F) / \Ker\rho$ is thus given by $\big(U(1) \times SU(2)\big) / \Z_2$. 
\end{proof}

Let $(M,g)$ again be an even-dimensional pseudo-Riemannian spin manifold as in Proposition \ref{prop:mfd_trip}, for which $t$ is odd. 
The representations $\pi$ and $\pi^\op$ obviously extend to representations of $C_c^\infty(M,\A_F)$ and $C_c^\infty(M,\A_F)^\op$ on $\mH_F\hot L^2(\bS)$, and it is easy to see that these representations satisfy the order-one condition \eqref{eq:order-one}. 
We consider the almost-commutative manifold
$$
F\Sub{EW} \times M := \left( C_c^\infty(M,\A_F\odot \A_F^\op) , \mH_F\hot L^2(\bS) , 1\hot i^t\sD + i\D_F\hot1 , 1\hot\mJ_M \right) .
$$

\begin{prop}
\label{prop:EW_inn_fluc}
The fluctuation of $\D := 1\hot i^t\sD + i\D_F\hot1$ by a perturbation $A\in\Pert(C_c^\infty(M,\A_F))$ is of the form
\begin{align*}
\D_A = \D + \eta_\D(A) = 1\hot i^t\sD + A_\mu \hot i^t\gamma^\mu + (i\D_F+\phi) \hot 1 ,
\end{align*}
where the \emph{gauge field} $A_\mu$ and the \emph{Higgs field} $\phi$ are given by
\begin{align*}
A_\mu &= \matthree{0&0&}{0&-2\Lambda_\mu&}{&&Q_\mu-\Lambda_\mu} , & 
\phi &= \matfour{0&0&m_\nu\bar{\phi_1}&m_\nu\bar{\phi_2}}{0&0&-m_e\phi_2&m_e\phi_1}{-m_\nu\phi_1&m_e\bar\phi_2&0&0}{-m_\nu\phi_2&-m_e\bar\phi_1&0&0} ,
\end{align*}
for the gauge fields $(\Lambda_\mu,Q_\mu) \in C_c^\infty\big(M,i\R\oplus\su(2)\big)$ and the Higgs field $(\phi_1,\phi_2) \in C_c^\infty(M,\C^2)$.
\end{prop}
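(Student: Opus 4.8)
The plan is to exploit the simplified formula for $\eta_\D$, which is available here because the representations satisfy the order-one condition \eqref{eq:order-one}, and then to split the two summands of $\D = 1\hot i^t\sD + i\D_F\hot1$. Writing a real, normalised perturbation as $A = \sum_j a_j\otimes b_j^\op$ with $a_j = (\lambda_j,q_j)$ and $b_j = (\lambda_j',q_j')$ in $C_c^\infty(M,\A_F)$, I would expand
$$
\eta_\D(A) = \sum_j \pi(a_j)\,[\D,\pi(b_j)] + \sum_j \pi^\op\big((a_j^*)^\op\big)\,\big[\D,\pi^\op\big((b_j^*)^\op\big)\big] .
$$
The commutators with $1\hot i^t\sD$ are of first order and, since $\sD = \gamma^\mu\nabla^\bS_\mu$, produce a term $A_\mu\hot i^t\gamma^\mu$ in which $A_\mu$ gathers the derivatives $\partial_\mu$ of the (matrix-valued) algebra elements, while the commutators with $i\D_F\hot1$ produce a term acting purely on $\mH_F$. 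The key preliminary observation is that $\pi^\op\big((\lambda,q)^\op\big) = \lambda\oplus\lambda$ is a scalar multiple of the identity on $\mH_F$: hence it commutes with $\D_F$ and contributes nothing to the finite (Higgs) part, while contributing only an abelian piece to $A_\mu$.

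For the gauge field I would evaluate $A_\mu = \sum_j \pi(a_j)\partial_\mu\pi(b_j) + \sum_j \pi^\op\big((a_j^*)^\op\big)\partial_\mu\pi^\op\big((b_j^*)^\op\big)$ block by block on $\{\nu_R,e_R\}\oplus\{\nu_L,e_L\}$, using $\pi(\lambda,q) = \operatorname{diag}(\lambda,\bar\lambda)\oplus q$. Setting $\Lambda_\mu := \sum_j\lambda_j\partial_\mu\lambda_j'$ and $Q_\mu := \sum_j q_j\partial_\mu q_j'$, the $\nu_R$-entry comes out as $\Lambda_\mu+\bar\Lambda_\mu$, the $e_R$-entry as $2\bar\Lambda_\mu$, and the $\mH_L$-block as $Q_\mu + \bar\Lambda_\mu$. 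The decisive input is reality of $A$: since $\eta_\D$ is involutive, $\eta_\D(A)$ is Krein-self-adjoint, and because $i^t\gamma^\mu$ is Krein-anti-symmetric (exactly as in the electrodynamics computation) while $\mJ_F=1$, this forces $A_\mu$ to be anti-Hermitian on $\mH_F$. Anti-Hermiticity makes the manifestly real $\nu_R$-entry vanish and shows $\Lambda_\mu\in i\R$, so that the $e_R$-entry equals $-2\Lambda_\mu$ and the $\mH_L$-block equals $Q_\mu-\Lambda_\mu$; anti-Hermiticity together with the quaternionic form of $\pi$ forces $Q_\mu\in\su(2)$. This reproduces the stated matrix for $A_\mu$.

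For the Higgs field I would use normalisation, $\sum_j\pi(a_j)\pi(b_j)=1$, to rewrite the finite part compactly as $i\D_F + \phi = i\sum_j\pi(a_j)\,\D_F\,\pi(b_j)$. Since $\D_F$ is purely off-diagonal in the $\mH_R\oplus\mH_L$ splitting, so is $\phi$, and multiplying $\operatorname{diag}(\lambda_j,\bar\lambda_j)$, the mass block $M=\operatorname{diag}(m_\nu,m_e)$ and $q_j' = \mattwo{\alpha_j'}{\beta_j'}{-\bar\beta_j'}{\bar\alpha_j'}$ gives the upper-right block of $i\D_F+\phi$ as $\sum_j\operatorname{diag}(\lambda_j,\bar\lambda_j)\,M\,q_j'$. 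Reading off the four entries and subtracting the unperturbed $M$, I would set $\phi_1 := \sum_j\bar\lambda_j\bar\alpha_j' - 1$ and $\phi_2 := \sum_j\bar\lambda_j\bar\beta_j'$, which reproduces the upper-right block of the stated $\phi$; the lower-left block is then determined by the anti-Hermiticity of $\phi$ (again forced by Krein-self-adjointness of $\eta_\D(A)$), completing the matrix.

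The main difficulty is not any single deep step but the careful bookkeeping: one must keep the representations $\pi$ and $\pi^\op$ distinct, track the two $2\times2$ blocks on $\mH_R$ and $\mH_L$, and, most importantly, extract from reality exactly the anti-Hermiticity that pins the gauge field down to $i\R\oplus\su(2)$ and fixes all signs and the factor $2$. I expect the cleanest route is to deduce these constraints from Krein-self-adjointness of $\eta_\D(A)$, rather than from $\bar A = A$ directly, precisely as was done for electrodynamics.
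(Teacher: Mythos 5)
Your proposal is correct and follows essentially the same route as the paper's proof: simplify $\eta_\D$ via the order-one condition, observe that $\pi^\op$ takes values in the scalars so it commutes with $\D_F$ and contributes nothing to the finite part, compute the Clifford part block by block, and use reality of $A$ (through Krein-self-adjointness of $\eta_\D(A)$ and Krein-anti-symmetry of $i^t\gamma^\mu$) to force anti-Hermiticity of $A_\mu$ and $\phi$, which pins down $\Lambda_\mu\in C_c^\infty(M,i\R)$, $Q_\mu\in C_c^\infty(M,\su(2))$ and all signs, exactly as in the electrodynamics case. The only cosmetic difference is in the Higgs block, where you use normalisation to write $i\D_F+\phi = i\sum_j\pi(a_j)\D_F\pi(b_j)$ and recover the lower-left block from anti-Hermiticity, while the paper computes both off-diagonal blocks of $\sum_j a_j[i\D_F,b_j]$ directly and then invokes reality to identify $\phi_1'=\bar{\phi_1}$ and $\phi_2'=\bar{\phi_2}$; the two bookkeepings agree.
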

\begin{proof}
Write $A = \sum_j a_j\otimes b_j^\op = \sum_j (\lambda_j,q_j)\otimes(\lambda'_j,q'_j)^\op \in \Pert(C_c^\infty(M,\A_F))$. 
Since the order-one condition is satisfied,
the fluctuation is of the form
$
\eta_\D(A) = a_j [\D,b_j] + a_j^{*\op} [\D,b_j^{*\op}] ,
$
where $\D := 1\hot i^t\sD + i\D_F\hot1$.  
The below calculations are similar to those in \cite[\S5.2.2]{vdDvS12}, and therefore we shall be brief. 
From the commutators with $i\D_F$, we obtain the \emph{Higgs field} 
\begin{align*}
\phi := \sum_j a_j [i\D_F,b_j] = \matfour{0&0&m_\nu\phi_1'&m_\nu\phi_2'}{0&0&-m_e\bar\phi_2'&m_e\bar\phi_1'}{-m_\nu\phi_1&m_e\bar\phi_2&0&0}{-m_\nu\phi_2&-m_e\bar\phi_1&0&0} ,
\end{align*}
where we define
\begin{align*}
\phi_1 &= \sum_j \alpha_j(\lambda_j'-\alpha_j')+\beta_j\bar\beta_j' , & 
\phi_2 &= \sum_j \bar\alpha_j\bar\beta_j'-\bar\beta_j(\lambda_j'-\alpha_j') , \\
\phi_1' &= \sum_j \lambda_j(\alpha_j'-\lambda_j') , & 
\phi_2' &= \sum_j \lambda_j\beta_j' .
\end{align*}
We also observe that $\D_F$ commutes with $\pi^\op$, so $a_j^{*\op} [i\D_F,b_j^{*\op}] = 0$. We note that reality of the perturbation $A$ ensures Krein-self-adjointness of $\phi$, and therefore self-adjointness of $i\phi$ (since $\phi$ anti-commutes with $\mJ$). Hence we must have $\phi_1' = \bar{\phi_1}$ and $\phi_2' = \bar{\phi_2}$. Furthermore, we can write
\begin{align*}
\sum_j a_j [i^t\sD,b_j] &= \matthree{\Lambda_\mu&0&}{0&-\Lambda_\mu&}{&&Q_\mu} \hot i^t\gamma^\mu , &
\sum_j a_j^{*\op} [i^t\sD,b_j^{*\op}] &= -\Lambda_\mu \hot i^t\gamma^\mu ,
\end{align*}
for $\Lambda_\mu := \sum_j \lambda_j\partial_\mu\lambda_j' \in C_c^\infty(M,i\R)$ and $Q_\mu := \sum_j q_j\partial_\mu q_j' \in C_c^\infty(M,\su(2))$. Thus the fluctuation of $\D = 1\hot i^t\sD + i\D_F\hot1$ by $A = a_j\otimes b_j^\op \in \Pert(\A)$ is of the form 
\begin{align*}
\eta_\D(A) = A_\mu \hot i^t\gamma^\mu + \phi \hot 1 ,
\end{align*}
where the \emph{gauge field} $A_\mu$ is given by
\begin{align*}
A_\mu &:= \matthree{0&0&}{0&-2\Lambda_\mu&}{&&Q_\mu-\Lambda_\mu} . \qedhere
\end{align*}
\end{proof}

The almost-commutative manifold $F\Sub{EW}\times M$ is of Lorentz-type, and hence we can apply Definition \ref{defn:Krein_action} for the Krein action. An arbitrary vector $\xi\in\mH^0 = \mH_L\hotimes L^2(\bS)^0 \oplus \mH_R\hotimes L^2(\bS)^1$ can be written uniquely as 
\begin{align}
\label{eq:right_fermion_EW}
\xi = \nu_R\hot\psi^\nu_R + e_R\hot\psi^e_R + \nu_L\hot\psi^\nu_L + e_L\hot\psi^e_L ,
\end{align}
for Weyl spinors $\psi^\nu_L,\psi^e_L\in L^2(\bS)^0$ and $\psi^\nu_R,\psi^e_R\in L^2(\bS)^1$. We observe that this vector $\xi\in\mH^0$ is completely determined by two Dirac spinors $\psi^\nu := \psi^\nu_L + \psi^\nu_R$ (describing the neutrino) and $\psi^e := \psi^e_L + \psi^e_R$ (describing the electron). 
We combine these spinors into the doublets of Weyl spinors 
\begin{align*}
\Psi_L &:= \vectwo{\psi^\nu_L}{\psi^e_L} \in L^2(\bS)^0\otimes\C^2 , & 
\Psi_R &:= \vectwo{\psi^\nu_R}{\psi^e_R} \in L^2(\bS)^1\otimes\C^2 , 
\end{align*}
and the corresponding doublet of Dirac spinors $\Psi := \Psi_L + \Psi_R \in L^2(\bS)\otimes\C^2$. 

\begin{prop}
\label{prop:alt_fermion_act_EW}
The Krein action for $F\Sub{EW}\times M$ is given by
\begin{multline*}
S\Sub{EW}[\Psi,A] 
= \la\Psi \mvert i^t\sD\Psi\ra + \la\psi^e_R \mvert {-2i^t\gamma^\mu\Lambda_\mu \psi^e_R} \ra + \la\Psi_L \mvert i^t\gamma^\mu(Q_\mu-\Lambda_\mu)\Psi_L\ra + \la\Psi_R \mvert \Phi \Psi_L \ra + \la\Psi_L \mvert \Phi^* \Psi_R \ra ,
\end{multline*}
where the gauge fields $\Lambda_\mu$ and $Q_\mu$ and the Higgs field $(\phi_1,\phi_2)$ are given in Proposition \ref{prop:EW_inn_fluc}, and the Higgs field $(\phi_1,\phi_2)$ acts via 
\begin{align*}
\Phi &:= \mattwo{-m_\nu\bar{(\phi_1+1)}}{-m_\nu\bar{\phi_2}}{m_e\phi_2}{-m_e(\phi_1+1)} .
\end{align*}
\end{prop}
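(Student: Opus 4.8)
The plan is to follow the electrodynamics computation of \cref{prop:alt_fermion_act_ED} essentially verbatim, but now keeping track of the internal doublet structure and of the mass matrix $\D_F$. Writing $\xi$ as in \eqref{eq:right_fermion_EW}, I would first compute $\mJ\xi=(1\hot\mJ_M)\xi$. Since $\mJ_M$ is odd, the graded tensor product rule produces a sign $(-1)^{|a|}$ for each basis vector $a\in\{\nu_R,e_R,\nu_L,e_L\}$, so that the right-handed (odd) components $\nu_R,e_R$ acquire a minus sign while the left-handed (even) ones do not; explicitly $\mJ\xi=-\nu_R\hot\mJ_M\psi^\nu_R-e_R\hot\mJ_M\psi^e_R+\nu_L\hot\mJ_M\psi^\nu_L+e_L\hot\mJ_M\psi^e_L$.

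Next I would compute $\D_A\xi$ by applying, term by term, the three summands of the fluctuated operator from \cref{prop:EW_inn_fluc}: the free part $1\hot i^t\sD$, the gauge part $A_\mu\hot i^t\gamma^\mu$, and the mass-and-Higgs part $(i\D_F+\phi)\hot 1$. Here I would substitute the explicit matrices for $A_\mu$ and $\phi$ and act on the four basis vectors, again inserting the graded-tensor signs (an extra $(-1)^{|a|}$ whenever the odd spinorial factor $i^t\sD$ or $i^t\gamma^\mu$ is moved past an odd basis vector $a$, and no sign for $(i\D_F+\phi)\hot 1$ since its spinorial factor is even). The free and gauge parts act diagonally in the doublet, sending each $\nu,e$ component to itself and to the $SU(2)$-rotated left doublet; the mass-and-Higgs part is purely off-diagonal, sending $\Psi_L$ into the right basis vectors and $\Psi_R$ into the left ones, which I would organise into a single $2\times2$ matrix acting on $\Psi_L$ (respectively $\Psi_R$).

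I would then form the inner product $\la\mJ\xi\mvert\D_A\xi\ra_\mJ$ using the positive-definite product $\la\cdot\mvert\cdot\ra_\mJ$ on the graded tensor product (as in \cref{prop:alt_fermion_act_ED}), discarding all cross terms by orthogonality of $L^2(\bS)^0$ and $L^2(\bS)^1$ and by orthonormality of the finite basis, and using conjugate-linearity to absorb the minus signs coming from $\mJ\xi$. Converting $\la\mJ_M\cdot\mvert\cdot\ra_{\mJ_M}$ back into the Krein product $\la\cdot\mvert\cdot\ra$ on $L^2(\bS)$, the diagonal free-Dirac contributions should assemble into $\la\Psi\mvert i^t\sD\Psi\ra$, the $A_\mu$ contributions into the electron term $\la\psi^e_R\mvert-2i^t\gamma^\mu\Lambda_\mu\psi^e_R\ra$ and the doublet term $\la\Psi_L\mvert i^t\gamma^\mu(Q_\mu-\Lambda_\mu)\Psi_L\ra$, and the $(i\D_F+\phi)$ contributions into the two off-diagonal Yukawa terms $\la\Psi_R\mvert\Phi\Psi_L\ra+\la\Psi_L\mvert\Phi^*\Psi_R\ra$.

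The one genuinely delicate point is the identification of the Yukawa matrix $\Phi$. I expect the constant mass matrix $\D_F$ and the Higgs fluctuation $\phi$ to combine so that the bare masses appear shifted as $\phi_1\mapsto\phi_1+1$, and I must check that the grading sign picked up by the odd right-handed basis vectors (together with conjugate-linearity of the product in its first slot) converts the naive coupling into precisely $\Phi=\mattwo{-m_\nu\bar{(\phi_1+1)}}{-m_\nu\bar{\phi_2}}{m_e\phi_2}{-m_e(\phi_1+1)}$, with the correct signs and complex conjugations on each of the four entries. Since $\mF$ is a symmetric quadratic form, the second Yukawa term is automatically the Krein-adjoint of the first, so it suffices to verify one of them and then invoke symmetry for $\la\Psi_L\mvert\Phi^*\Psi_R\ra$. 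The remaining bookkeeping is routine and parallels \cite[\S5.2.2]{vdDvS12} and the electrodynamics case.
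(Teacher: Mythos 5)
Your proposal is correct and takes essentially the same route as the paper's own proof: compute $\mJ\xi$ and $\D_A\xi$ term by term with the graded-tensor signs, pair them in $\la\cdot\mvert\cdot\ra_\mJ$, discard cross terms by orthogonality of $L^2(\bS)^0$ and $L^2(\bS)^1$, and read off the Yukawa matrix $\Phi$ with the shift $\phi_1\mapsto\phi_1+1$ coming from combining $i\D_F$ with $\phi$. Your explicit formula for $\mJ\xi$, your sign bookkeeping, and your appeal to the symmetry of the form to handle the second Yukawa term all match the paper's computation.
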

\begin{proof}
We need to calculate the inner product $\la\mJ\xi,\D_A\xi\ra_\mJ$, where $\xi$ is given as in Eq.\ \ref{eq:right_fermion_EW}. First, for $\mJ = 1\hot\mJ_M$ we find
$$
\mJ\xi = - \nu_R\hot\mJ_M\psi^\nu_R - e_r\hot\mJ_M\psi^e_R + \nu_L\hot\mJ_M\psi^\nu_L + e_L\hot\mJ_M\psi^e_L .
$$
For the fluctuated Dirac operator $\D_A$ of Proposition \ref{prop:EW_inn_fluc} we find
\begin{align*}
\D_A\xi &= - \nu_R\hot i^t\sD\psi^\nu_R - e_R\hot i^t\sD\psi^e_R + \nu_L\hot i^t\sD\psi^\nu_L + e_L\hot i^t\sD\psi^e_L \\
&\quad + 2\Lambda_\mu e_R\hot i^t\gamma^\mu\psi^e_R + i^t\gamma^\mu(Q_\mu-\Lambda_\mu) (\nu_L\hot\psi^\nu_L,e_L\hot\psi^e_L)^t \\
&\quad - \nu_L\hot\big(m_\nu(\phi_1+1)\psi^\nu_R - m_e\bar{\phi_2}\psi^e_R\big) - e_L\hot\big(m_\nu\phi_2\psi^\nu_R + m_e\bar{(\phi_1+1)}\psi^e_R\big) \\
&\quad+ \nu_R\hot\big(m_\nu\bar{(\phi_1+1)}\psi^\nu_L + m_\nu\bar{\phi_2}\psi^e_L\big) - e_R\hot\big(m_e\phi_2\psi^\nu_L - m_e(\phi_1+1)\psi^e_L\big) .
\end{align*}
Taking the inner product of $\mJ\xi$ with $\D_A\xi$, and using the notation $\Psi_L$, $\Psi_R$, and $\Phi$, we obtain
\begin{align*}
\la\mJ\xi \mvert \D_A\xi\ra_{\mJ} 
&= \big\la\mJ_M\psi^\nu_R \bigm| i^t\sD\psi^\nu_R\big\ra_{\mJ_M} + \big\la\mJ_M\psi^e_R \bigm| i^t\sD\psi^e_R\big\ra_{\mJ_M} + \big\la\mJ_M\psi^\nu_L \bigm| i^t\sD\psi^\nu_L\big\ra_{\mJ_M} + \big\la\mJ_M\psi^e_L \bigm| i^t\sD\psi^e_L\big\ra_{\mJ_M} \\
&\quad- \big\la\mJ_M \psi^e_R \bigm| 2i^t\gamma^\mu\Lambda_\mu \psi^e_R \big\ra_{\mJ_M} + \big\la\mJ_M \Psi_L \bigm| i^t\gamma^\mu(Q_\mu-\Lambda_\mu)\Psi_L\big\ra_{\mJ_M} \\
&\quad + \big\la\mJ_M\Psi_R \bigm| \Phi \Psi_L \big\ra_{\mJ_M} + \big\la\mJ_M\Psi_L \bigm| \Phi^* \Psi_R \big\ra_{\mJ_M} ,
\end{align*}
The desired expression for $S\Sub{EW}[\Psi,A] = \la\xi \mvert \D_A\xi\ra = \la\mJ\xi \mvert \D_A\xi\ra_{\mJ}$ then follows by using the orthogonality of $L^2(\bS)^0$ and $L^2(\bS)^1$ and the symmetry of $\la\cdot|\cdot\ra$. 
\end{proof}

We observe that the Lagrangian calculated above is precisely (the fermionic part of) the usual Lagrangian for the lepton sector of the Glashow-Weinberg-Salam theory of electroweak interactions, including right-handed neutrinos (but without Majorana masses). For instance, the term $\la\Psi_L | \Phi^* \Psi_R \ra$ can be rewritten in the form
$$
- m_\nu \left\la \Psi_L \Biggm| \vectwo{\phi_1+1}{\phi_2} \psi^\nu_R \right\ra - m_e \left\la \Psi_L \Biggm| \vectwo{-\bar{\phi_2}}{\bar{\phi_1+1}} \psi^e_R \right\ra ,
$$
which is of the same form as \cite[Eq.(20.101)]{Peskin-Schroeder95} (though there it is given for quarks instead of leptons). If we substitute the vacuum expectation value for the Higgs field, setting $\phi_1+1=v/\sqrt{2}$ and $\phi_2=0$, we obtain
$$
-\frac1{\sqrt{2}} v \Bigl( m_\nu \left\la \psi^\nu_L \mvert \psi^\nu_R \right\ra + m_e \left\la \psi^e_L \mvert \psi^e_R \right\ra \Bigr) , 
$$
which are indeed the standard mass terms for the neutrino and the electron. 

\subsection{Majorana masses}
\label{sec:Majorana}

Let us briefly discuss how one can add Majorana masses into the model as well. For this purpose, we double up the Hilbert space, and introduce a \emph{real structure} (for the definition of a real structure, see e.g. \cite[Definition 3]{Con95b} or \cite[Definition 1.124]{CM07}).  
Given the Hilbert space $\mH_F$ with basis $\{\nu_R,e_R,\nu_L,e_L\}$, we create an identical copy $\mH_{\bar F}$ on which we denote the basis as $\{\bar{\nu_R},\bar{e_R},\bar{\nu_L},\bar{e_L}\}$, and we interpret this new copy as describing the anti-particles. We then consider the new Hilbert space $\hat\mH_F := \mH_F\oplus\mH_{\bar F}$ along with a mass matrix $\hat\D_F$, a `fundamental symmetry' $\hat\mJ_F$, a grading $\hat\Gamma_F$, and the \emph{real structure} $\hat J_F$ given by
\begin{align*}
\hat\D_F &:= \mattwo{\D_F}{-\D_M^*}{\D_M}{\bar{\D_F}} , & \hat\mJ_F &:= \mattwo{1}{0}{0}{-1} , &
\hat\Gamma_F &:= \mattwo{\Gamma_F}{0}{0}{-\Gamma_F} , & \hat J_F &:= \mattwo{0}{\cc}{\cc}{0} ,
\end{align*}
where $\cc$ denotes complex conjugation (with respect to the standard basis), and $\bar{\D_F}$ is the complex conjugate of the mass matrix $\D_F$. The map $\D_M\colon\mH_F\to\mH_{\bar F}$ is given as $\D_M \nu_R := i m_R \bar{\nu_R}$, 
where $m_R\in\R$ is the \emph{Majorana mass} of the right-handed neutrino, and $\D_M e_R = \D_M \nu_L = \D_M e_L = 0$. We point out that $\hat\D_F$ is Krein-self-adjoint, and that $\hat J_F$ anti-commutes with both $\hat\mJ_F$ and $\hat\Gamma_F$. The mass matrix $\hat\D_F$ does not commute (or anti-commute) with the real structure $\hat J_F$; 
instead we have the relation $\hat\D_F\hat J_F = \hat J_F\hat\D_F^*$ (where we used the symmetry of $\D_M$). Thus the commutator picks out the skew-adjoint part of $\hat\D_F$, which is just the part containing the Majorana mass. 
To be precise:
$$
[\hat D_F,\hat J_F] = \mattwo{-2\bar{\D_M}}{0}{0}{2\D_M} \circ \cc 
$$

Recalling the representations $\pi\colon \A_F\to\B(\mH_F)$ and $\pi^\op\colon \A_F^\op\to\B(\mH_F)$, we obtain representations $\hat\pi\colon \A_F\to\B(\mH_F\oplus\mH_{\bar F})$ and $\hat\pi^\op\colon \A_F^\op\to\B(\mH_F\oplus\mH_{\bar F})$ given by 
\begin{align*}
\hat\pi(a) &:= \pi(a) \oplus \pi^\op(a^t) , & \hat\pi^\op(a) &:= \hat J_F \hat\pi(a^*) \hat J_F ,
\end{align*}
where $a^t$ denotes the matrix transpose of $a$. With these definitions, we obtain a new finite space $\hat F\Sub{EW} := (\A_F\odot \A_F^\op, \hat\mH_F, \hat\D_F, \hat\mJ_F)$ with grading operator $\hat\Gamma_F$ and with a real structure $\hat J_F$. 

Now consider a $4$-dimensional Lorentzian spin manifold $M$. We also equip the Krein spectral triple over $M$ (given in Proposition \ref{prop:mfd_trip}) with a real structure, given by the \emph{charge conjugation operator} $J_M$ on the spinor bundle. 
For even-dimensional pseudo-Riemannian manifolds, the (anti)commutation relations of the charge conjugation operator are given in \cite[Proposition 3]{Bau94}. In the $4$-dimensional Lorentzian case, the charge conjugation operator commutes with the Clifford representation and with the Dirac operator $\sD$, anti-commutes with the grading operator $\Gamma_M$, and satisfies $J_M^2=-1$ . Since $J_M$ commutes with the Clifford representation, it commutes with the fundamental symmetry $\mJ_M = \gamma(e_0)$. 

Next, we can consider the almost-commutative manifold $\hat F\Sub{EW}\times M$, which we equip with the real structure $J := \hat J_F\hot J_M$. We calculate
\begin{align*}
J^2 &= (\hat J_F\hot J_M)(\hat J_F\hot J_M) = - \hat J_F^2 \hot J_M^2 = -1\hot(-1) = 1 , \\
J\Gamma &= (\hat J_F\hot J_M)(\hat\Gamma_F\hot\Gamma_M) = \hat J_F\hat\Gamma_F \hot J_M\Gamma_M = (-\hat\Gamma_F\hat J_F)\hot(-\Gamma_M J_M) = (\hat\Gamma_F\hot\Gamma_M)(\hat J_F\hot J_M) = \Gamma J , \\
J \mJ &= (\hat J_F\hot J_M)(\hat\mJ_F\hot\mJ_M) = \hat J_F\hat\mJ_F \hot J_M\mJ_M = (-\hat\mJ_F\hat J_F)\hot(\mJ_M J_M) = (\hat\mJ_F\hot\mJ_M)(\hat J_F\hot J_M) = \mJ J .
\end{align*}
We note that to get $[J,\mJ]=0$ (which we need below) we used that $\hat J_F \hat\mJ_F = -\hat\mJ_F \hat J_F$, and hence it is essential that $\mJ_F$ is non-trivial. 

Since we have doubled the finite-dimensional Hilbert space, we have introduced too many degrees of freedom. 
To correct this, we follow the approach of \cite{Bar07} and consider vectors $\eta\in\mH^0$ which (in addition to $\Gamma\eta=\eta$) also satisfy $J\eta=\eta$. 
Since $J^2=1$ and $J\Gamma=\Gamma J$, this assumption makes sense, and it means we can write $\eta = \xi + J\xi$, where $\xi$ is an element of $(\mH_F\hot L^2(\bS))^0$ as given in Eq.\ \ref{eq:right_fermion_EW}. 
The fermionic action is then of the form
$$
\la\mJ\eta \mvert \D_A\eta\ra_\mJ = \la\mJ\xi \mvert \D_A\xi\ra_\mJ + \la\mJ J\xi \mvert \D_A \xi\ra_\mJ + \la\mJ \xi \mvert \D_A J\xi\ra_\mJ + \la\mJ J\xi \mvert \D_A J\xi\ra_\mJ .
$$
The last term can be rewritten as
$$
\la\mJ J\xi \mvert \D_A J\xi\ra_\mJ = \la \mJ J\xi \mvert J\D_A\xi\ra_\mJ + \la J\mJ\xi \mvert [\D_A,J]\xi\ra_\mJ ,
$$
The commutator $[\D_A,J]$ equals $i[\hat\D_F,\hat J_F]\hot J_M$. For $\xi\in(\mH_F\hot L^2(\bS))^0$ we have $[\D_A,J]\xi\in\mH_F\hot L^2(\bS)$ while $J\mJ\xi\in\mH_{\bar F}\hot L^2(\bS)$. The subspaces $\mH_F\hot L^2(\bS)$ and $\mH_{\bar F}\hot L^2(\bS)$ are orthogonal, so the term $\la J\mJ\xi \mvert [\D_A,J]\xi\ra_\mJ$ vanishes. Since $J$ commutes with $\mJ$, we find that $\la\mJ J\xi \mvert \D_A J\xi\ra_\mJ = \la\mJ\xi \mvert \D_A\xi\ra_\mJ = S\Sub{EW}[\Psi,A]$. 

Hence the only new contributions to the fermionic action come from the terms $\la\mJ J\xi \mvert \D_A \xi\ra_\mJ$ and $\la\mJ \xi \mvert \D_A J\xi\ra_\mJ$. Since the subspaces $\mH_F\hot L^2(\bS)$ and $\mH_{\bar F}\hot L^2(\bS)$ are orthogonal, we only need to consider the part of $\D_A$ which mixes particles and anti-particles, which is precisely just the Majorana mass matrix $\D_M$. For the vector $\xi^\nu_R := \nu_R\hot\psi^\nu_R$ representing the right-handed neutrino, we calculate
\begin{align*}
J \xi^\nu_R &= - \hat J_F\nu_R \hot J_M\psi^\nu_R = - \bar{\nu_R} \hot J_M\psi^\nu_R , \\
\mJ J \xi^\nu_R &= - \mJ (\bar{\nu_R} \hot J_M\psi^\nu_R) = - \hat\mJ_F\bar{\nu_R} \hot \mJ_MJ_M\psi^\nu_R = \bar{\nu_R} \hot \mJ_MJ_M\psi^\nu_R , \\
(i\D_M\hot1)\xi^\nu_R &= -m_R \bar{\nu_R}\hot\psi^\nu_R , \\
(-i\D_M^*\hot1) J \xi^\nu_R &= m_R \nu_R\hot J_M\psi^\nu_R .
\end{align*}
This gives
\begin{align*}
\la\mJ\xi | \D_A J\xi\ra_\mJ &= \big\la {-\nu_R\hot\mJ_M\psi^\nu_R} | m_R\nu_R\hot J_M\psi^\nu_R \big\ra_\mJ = -m_R \big\la \mJ_M\psi^\nu_R | J_M\psi^\nu_R \big\ra_{\mJ_M} , \\*
\la \mJ J\xi | \D_A \xi\ra_\mJ &= \big\la \bar{\nu_R} \hot \mJ_MJ_M\psi^\nu_R | {-m_R \bar{\nu_R}\hot\psi^\nu_R} \big\ra_\mJ = -m_R \big\la \mJ_M J_M \psi^\nu_R | \psi^\nu_R \big\ra_{\mJ_M} .
\end{align*}
Summarising, we can extend the electro-weak theory to include Majorana masses for right-handed neutrinos, and we obtain the new action $S\Sub{EW\!+\!M}$ given by
$$
S\Sub{EW\!+\!M}[\Psi,A] = 2 S\Sub{EW}[\Psi,A] - m_R \la \psi^\nu_R | J_M\psi^\nu_R \ra - m_R \la J_M \psi^\nu_R | \psi^\nu_R \ra ,
$$
where $S\Sub{EW}$ is given in Proposition \ref{prop:alt_fermion_act_EW}. 

\begin{remark}
In this section we have introduced the notion of real structure on a Krein spectral triple in a rather ad hoc manner. One might wonder how to describe the general theory of real Krein spectral triples. Do we for instance obtain an alternative notion for the KO-dimension of such triples? While the KO-dimension of real spectral triples arises from the eightfold Bott periodicity in real K-homology, there is no Krein version of K-homology, and hence the answer to this question is not straightforward. In any case, one would expect that the signs determining the (anti)commutation relations of the real structure now not only depend on the dimension, but also on the signature. For the even-dimensional case, these signs are given in \cite{Bau94}. 
\end{remark}

\section{The Standard Model}
\label{sec:krein_SM}

Given the description of the electro-weak theory of the previous section, it is fairly straightforward to extend this theory to the full Standard Model as described in \cite{Con06,CCM07}. 
This extension is basically obtained by including a summand $M_3(\C)$ in the algebra $\A_F$ to describe the strong interactions, and by enlarging the Hilbert space $\mH_F$ to incorporate the quarks. Moreover, the Hilbert space is then enlarged three-fold to include three generations of all elementary particles. Since most of the details are similar to the electro-weak theory, and since there is already plenty of literature available on the noncommutative description of the Standard Model (see e.g.\ \cite{Con96,Con06,CCM07,CM07,JKSS07,CC10,vdDvS12}), we shall be rather brief in this section.

Thus, we take the algebra $\A_F = \C\oplus\qH\oplus M_3(\C)$, which is represented on the finite-dimensional Hilbert space $\mH_F := (\mH_R\oplus\mH_L) \otimes \C^3$. The factor $\C^3$ describes the fact that there are three generations of elementary particles. The right-handed particles $\mH_R$ and the left-handed particles $\mH_L$ are both given by $\C^2 \oplus (\C^2\otimes\C^3)$. Here the first summand $\C^2$ describes the two leptons $\nu$ and $e$, and the second summand $\C^2\otimes\C^3$ describes the quarks $u^c$ and $d^c$ (which occur in three colours $c=r,g,b$). 

We will consider the commuting representations $\pi\colon \A_F\to\B((\mH_R\oplus\mH_L) \otimes \C^3)$ and $\pi^\op\colon \A_F^\op\to\B((\mH_R\oplus\mH_L) \otimes \C^3)$ given by
\begin{align*}
\pi(\lambda,q,b) &:= \big( (q_\lambda \oplus (q_\lambda\otimes1)) \oplus (q\oplus (q\otimes1)) \big) \otimes 1 , \\
\pi^\op(\lambda,q,b) &:= \big( (\lambda \oplus (1\otimes b^t)) \oplus (\lambda \oplus (1\otimes b^t)) \big) \otimes 1 ,
\end{align*}
where $b^t$ denotes the matrix transpose of $b$. 
The representation $\tilpi:=\pi\otimes\pi^\op\colon \A_F\odot \A_F^\op \to \B(\mH_F)$ is then given by
$$
\tilpi\big((\lambda,q,b)\otimes(\lambda',q',b')^\op\big) = \big( (\lambda'q_\lambda \oplus (q_\lambda\otimes b'^t)) \oplus (\lambda'q\oplus (q\otimes b'^t)) \big) \otimes 1 .
$$
We consider the even finite space $F\Sub{SM} := \left( \A_F , \mH_F , \D_F , \mJ_F = 1 \right)$, where the mass matrix is given by
\begin{align*}
\D_F &:= \matfour{0&0&-iY_\nu&0}{0&0&0&-iY_e}{iY_\nu&0&0&0}{0&iY_e&0&0} \oplus \matfour{0&0&-iY_u&0}{0&0&0&-iY_d}{iY_u&0&0&0}{0&iY_d&0&0} \otimes1 .
\end{align*}
Here each $Y_\bullet$ is a hermitian $3\times3$-matrix corresponding to the three generations of each type of particle. 

Similarly to Proposition \ref{prop:gauge_gp_EW}, the gauge group of the finite space $F\Sub{SM}$ is given by $\G(F\Sub{SM}) = \big(U(1)\times SU(2)\times U(3)\big) / \Z_2$. This gauge group does not match the gauge group of the Standard Model (even modulo finite groups), since we have a factor $U(3)$ instead of $SU(3)$. 
As in \cite[\S2.5]{CCM07} (see also \cite[\S6.2.1]{vdDvS12}), we will therefore impose the \emph{unimodularity condition} $\det|_{\mH_F}\big(\rho(u)\big) = 1$, which yields the subgroup
\begin{align*}
S\G(F\Sub{SM}) = \Big\{ \rho(u) \in \G(F\Sub{SM}) : u = (\lambda,q,b)\in\mU(\A_F) ,\; \big(\lambda\det b\big)^{12} = 1 \Big\} .
\end{align*}
The effect of the unimodularity condition is that the determinant of $b\in U(3)$ is identified (modulo the finite group $\mu_{12}$ of $12$th-roots of unity) to $\bar\lambda\in U(1)$. In other words, imposing the unimodularity condition provides us, modulo some finite abelian group, with the gauge group $U(1)\times SU(2) \times SU(3)$ of the Standard Model. 

The calculations for the inner fluctuations and the fermionic action of the Standard Model are similar to the case of the electro-weak theory (although somewhat more cumbersome). Below we will simply give the results. 

\begin{prop}
\label{prop:SM_inn_fluc}
The fluctuation of $\D$ by $A\in\Pert(C_c^\infty(M,\A_F))$ is of the form
\begin{align*}
\D_A = \D + \eta_\D(A) = 1\hot i^t\sD + A_\mu \hot i^t\gamma^\mu + (i\D_F+\phi) \hot 1 ,
\end{align*}
where the \emph{gauge field} $A_\mu$ and the \emph{Higgs field} $\phi$ are given by
\begin{align*}
A_\mu &= \matthree{0&0&}{0&-2\Lambda_\mu&}{&&Q_\mu-\Lambda_\mu} \oplus \left(\matthree{\frac43\Lambda_\mu&0&}{0&-\frac23\Lambda_\mu&}{&&(Q_\mu+\frac13\Lambda_\mu)\otimes1} + 1\otimes V_\mu\right) \\
\phi &= \matfour{0&0&Y_\nu\bar{\phi_1}&Y_\nu\bar{\phi_2}}{0&0&-Y_e\phi_2&Y_e\phi_1}{-Y_\nu\phi_1&Y_e\bar\phi_2&0&0}{-Y_\nu\phi_2&-Y_e\bar\phi_1&0&0} \oplus \matfour{0&0&Y_u\bar{\phi_1}&Y_u\bar{\phi_2}}{0&0&-Y_d\phi_2&Y_d\phi_1}{-Y_u\phi_1&Y_d\bar\phi_2&0&0}{-Y_u\phi_2&-Y_d\bar\phi_1&0&0} \otimes1,
\end{align*}
for the gauge fields $(\Lambda_\mu,Q_\mu,V_\mu) \in C_c^\infty\big(M,i\R\oplus\su(2)\oplus\su(3)\big)$ and the Higgs field $(\phi_1,\phi_2) \in C_c^\infty(M,\C^2)$.
\end{prop}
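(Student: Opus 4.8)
The plan is to follow the blueprint of \cref{prop:EW_inn_fluc} verbatim, treating the new colour summand $M_3(\C)$ as an extra tensor factor. Since the order-one condition holds, I would start from $\eta_\D(A) = \sum_j a_j[\D,b_j] + \sum_j a_j^{*\op}[\D,b_j^{*\op}]$ and split $\D = 1\hot i^t\sD + i\D_F\hot 1$. Commutators with the manifold part $1\hot i^t\sD$ are first order and produce the gauge term $A_\mu\hot i^t\gamma^\mu$, while commutators with $i\D_F\hot 1$ are bounded and produce the Higgs term $\phi\hot 1$. As in the electro-weak case, $\D_F$ commutes with $\pi^\op$: on the lepton sector $\pi^\op$ is the scalar $\lambda$, and on the quark sector $\pi^\op = 1\otimes b^t$ acts on the colour factor $\C^3$ while $\D_F$ restricts to a mass matrix tensored with $1_3$, so the two act on different tensor legs and commute. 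Hence $a_j^{*\op}[i\D_F,b_j^{*\op}] = 0$ and the Higgs field arises solely from $\sum_j a_j[i\D_F,b_j]$.

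For the Higgs part I would simply transcribe the quaternionic computation of \cref{prop:EW_inn_fluc}. The component $q = \alpha+\beta j$ of the algebra acts identically (up to the spectator factor $1_3$) on leptons and on quarks, so a single doublet $(\phi_1,\phi_2)\in C_c^\infty(M,\C^2)$ is produced, now contracted with the Yukawa matrices $Y_\nu,Y_e$ in the lepton block and $Y_u,Y_d$ in the quark block of $\D_F$; this yields exactly the two blocks of $\phi$ in the statement. Reality of $A$ again forces the barred entries $\phi_1'=\bar{\phi_1}$ and $\phi_2'=\bar{\phi_2}$, so that $i\phi$ is self-adjoint.

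For the gauge part I would separate the contributions of $\pi$ and $\pi^\op$. Setting $\Lambda_\mu := \sum_j\lambda_j\partial_\mu\lambda_j'\in C_c^\infty(M,i\R)$ and $Q_\mu := \sum_j q_j\partial_\mu q_j'\in C_c^\infty(M,\su(2))$, the $\pi$-fluctuation gives $\mathrm{diag}(\Lambda_\mu,-\Lambda_\mu)\otimes 1$ on the right-handed quarks and $Q_\mu\otimes 1$ on the left-handed doublet, and reproduces the lepton block verbatim once the $\pi^\op$ scalar $\lambda$ is added there (giving $\mathrm{diag}(0,-2\Lambda_\mu)$ on $\mH_R$ and $Q_\mu-\Lambda_\mu$ on $\mH_L$). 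The genuinely new ingredient is the colour representation $\pi^\op = 1\otimes b^t$, whose fluctuation contributes $1\otimes W_\mu$ with $W_\mu$ a priori valued in $\mathfrak{u}(3)$ and acting uniformly on $u_R$, $d_R$ and the quark doublet.

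The main obstacle is the bookkeeping of the abelian fields. Decomposing $\mathfrak{u}(3)=\su(3)\oplus i\R$ writes $W_\mu = V_\mu + \tfrac13(\tr W_\mu)\,1$ with $V_\mu\in C_c^\infty(M,\su(3))$ the gluon field, leaving a trace $U(1)$ alongside $\Lambda_\mu$. The stated form, in which only $\Lambda_\mu$ appears, is the one obtained after the $U(3)\to SU(3)$ reduction set up just before the proposition: unimodularity (the infinitesimal form of $(\lambda\det b)^{12}=1$) identifies $\tr W_\mu$ with $\Lambda_\mu$, so that the uniform colour-trace contribution adds $\tfrac13\Lambda_\mu$ to every quark, turning the $\pi$-values $(\Lambda_\mu,-\Lambda_\mu,Q_\mu)$ into the hypercharges $\tfrac43\Lambda_\mu$, $-\tfrac23\Lambda_\mu$ and $Q_\mu+\tfrac13\Lambda_\mu$. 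Checking that reality and normalisation make this distribution exact is the delicate step; assembling the resulting $A_\mu$ and $\phi$ into the displayed block matrices is then routine.
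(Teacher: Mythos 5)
Your proposal is correct and takes essentially the approach the paper intends: the paper in fact omits the proof of this proposition altogether, saying only that ``the calculations \dots\ are similar to the case of the electro-weak theory (although somewhat more cumbersome)'' and citing \cite{CCM07,vdDvS12}, and your transcription of the proof of \cref{prop:EW_inn_fluc} (Higgs field from $\sum_j a_j[i\D_F,b_j]$ only, since $\D_F$ commutes with $\pi^\op$; gauge field from the first-order commutators) supplies exactly those details. In particular, you correctly isolate the one genuinely new point: the colour contribution $1\otimes W_\mu$ is a priori $\mathfrak{u}(3)$-valued, and the stated form with $V_\mu\in\su(3)$ and hypercharges $\tfrac43\Lambda_\mu$, $-\tfrac23\Lambda_\mu$, $Q_\mu+\tfrac13\Lambda_\mu$ presupposes the unimodularity reduction $\tr W_\mu=\Lambda_\mu$ (equivalently $\tr|_{\mH_F}A_\mu=0$) that the paper sets up immediately before the proposition, which is the same mechanism used in the cited references.
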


Similarly to Eq.\ \ref{eq:right_fermion_EW}, an arbitrary vector $\xi\in\mH^0 = \mH_L\otimes L^2(\bS)^0 \oplus \mH_R\otimes L^2(\bS)^1$ is uniquely determined by Dirac spinors $\psi^\nu$ (describing the three neutrinos), $\psi^e$ (describing the electron, muon, and tau-particle), $\psi^u$ (describing the up, charm, and top quarks in three colours) and $\psi^d$ (describing the down, strange, and bottom quarks in three colours), where we have omitted the generational index from our notation. We group these spinors together into the multiplets $\Psi^l \in L^2(\bS)\otimes\C^2\otimes\C^3$ (describing the leptons) and $\Psi^q \in L^2(\bS)\otimes\C^2\otimes\C^3\otimes\C^3$ (describing the quarks). 

\begin{prop}
\label{prop:alt_fermion_act_SM}
The Krein action for $F\Sub{SM}\times M$ is given by
\begin{align*}
S\Sub{EW}[\Psi,A] 
&= \la\Psi^l | i^t\sD\Psi^l\ra + \la\Psi^q | i^t\sD\Psi^q\ra \\
&\quad+ \la\psi^e_R | {-}2i^t\gamma^\mu\Lambda_\mu \psi^e_R \ra + \la\psi^u_R | \tfrac43i^t\gamma^\mu\Lambda_\mu \psi^u_R \ra + \la\psi^d_R | {-}\tfrac23i^t\gamma^\mu\Lambda_\mu \psi^d_R \ra \\
&\quad+ \la\Psi^l_L | i^t\gamma^\mu(Q_\mu-\Lambda_\mu)\Psi^l_L\ra + \la\Psi^q_L | i^t\gamma^\mu(Q_\mu-\Lambda_\mu)\Psi^q_L\ra + \la\Psi^q | V_\mu \Psi^q\ra \\
&\quad+ \la\Psi^l_R | \Phi^l \Psi^l_L \ra + \la\Psi^l_L | (\Phi^l)^* \Psi^l_R \ra + \la\Psi^q_R | \Phi^q \Psi^q_L \ra + \la\Psi^q_L | (\Phi^q)^* \Psi^q_R \ra ,
\end{align*}
where the gauge fields $\Lambda_\mu$, $Q_\mu$, and $V_\mu$ and the Higgs field $(\phi_1,\phi_2)$ are given in Proposition \ref{prop:SM_inn_fluc}, and the Higgs field acts via
\begin{align*}
\Phi^l &:= \mattwo{-Y_\nu\bar{(\phi_1+1)}}{-Y_\nu\bar{\phi_2}}{Y_e\phi_2}{-Y_e(\phi_1+1)} , &
\Phi^q &:= \mattwo{-Y_u\bar{(\phi_1+1)}}{-Y_u\bar{\phi_2}}{Y_d\phi_2}{-Y_d(\phi_1+1)} .
\end{align*}
\end{prop}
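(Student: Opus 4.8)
The plan is to repeat, almost verbatim, the computation carried out for the electro-weak model in the proof of Proposition \ref{prop:alt_fermion_act_EW}, now using the larger finite space $F\Sub{SM}$ and the fluctuated operator $\D_A$ supplied by Proposition \ref{prop:SM_inn_fluc}. First I would parametrise a general even vector $\xi\in\mH^0 = \mH_L\otimes L^2(\bS)^0 \oplus \mH_R\otimes L^2(\bS)^1$ by its Weyl-spinor components, grouping the left- and right-handed pieces into the lepton multiplet $\Psi^l\in L^2(\bS)\otimes\C^2\otimes\C^3$ and the quark multiplet $\Psi^q\in L^2(\bS)\otimes\C^2\otimes\C^3\otimes\C^3$, exactly as before but now carrying the extra generation index (and, for quarks, a colour index). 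The Krein action is $S_\mK[\xi] = \la\xi|\D_A\xi\ra = \la\mJ\xi|\D_A\xi\ra_\mJ$, so the whole proof reduces to computing $\mJ\xi$ and $\D_A\xi$, forming this inner product, and invoking the orthogonality of $L^2(\bS)^0$ and $L^2(\bS)^1$ to retain only the surviving terms.

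Since $\mJ_F=1$ we have $\mJ = 1\hot\mJ_M$, so $\mJ\xi$ simply applies $\mJ_M$ componentwise, the odd (right-handed) components acquiring a minus sign from the grading, precisely as in the electro-weak calculation. For $\D_A\xi$ I would split $\D_A$ into its three summands. The kinetic term $1\hot i^t\sD$ produces the free Dirac contributions $\la\Psi^l|i^t\sD\Psi^l\ra + \la\Psi^q|i^t\sD\Psi^q\ra$. The gauge term $A_\mu\hot i^t\gamma^\mu$ is read off directly from the block form of $A_\mu$ in Proposition \ref{prop:SM_inn_fluc}: the diagonal hypercharge entries give the right-handed couplings $-2\Lambda_\mu$ on $\psi^e_R$, $\tfrac43\Lambda_\mu$ on $\psi^u_R$ and $-\tfrac23\Lambda_\mu$ on $\psi^d_R$, the remaining blocks couple to the left-handed doublets $\Psi^l_L$ and $\Psi^q_L$, and the colour block $1\otimes V_\mu$ couples to the full quark multiplet $\Psi^q$.

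The mass-and-Higgs term $(i\D_F+\phi)\hot 1$ supplies the remaining contributions. As in the electro-weak case, combining the bare mass matrix $i\D_F$ with the fluctuation $\phi$ shifts $\phi_1\mapsto\phi_1+1$ and assembles the off-diagonal left-right blocks into the matrices $\Phi^l$ and $\Phi^q$; the only difference from electrodynamics and the electro-weak model is that the scalar masses are replaced by the hermitian $3\times3$ Yukawa matrices $Y_\nu,Y_e,Y_u,Y_d$ acting on the generation index. Taking the inner product of $\mJ\xi$ with $\D_A\xi$ and discarding the cross terms that pair $L^2(\bS)^0$ against $L^2(\bS)^1$ then leaves exactly the stated expression, where the two Higgs terms $\la\Psi^\bullet_R|\Phi^\bullet\Psi^\bullet_L\ra$ and $\la\Psi^\bullet_L|(\Phi^\bullet)^*\Psi^\bullet_R\ra$ are produced by using the symmetry of the Krein form.

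The computation involves no new conceptual ingredient beyond the electro-weak proof; the main obstacle is purely organisational. One must track three indices simultaneously — the $\Z_2$-grading (left versus right), the generation index in $\C^3$, and the colour index in $\C^3$ present only in the quark sector — while correctly distributing across the lepton and quark blocks the hypercharge coefficients dictated by the representations $\pi$ and $\pi^\op$. The delicate points are ensuring that the signs produced by $\mJ_M$ on the odd components match those in the electro-weak computation, and that the orthogonality argument eliminates precisely the unwanted cross terms, so that the lepton and quark sectors decouple into the two parallel families of terms displayed in the statement.
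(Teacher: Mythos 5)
Your proposal is correct and coincides with the paper's own treatment: the paper gives no separate proof for the Standard Model case, saying only that the calculations are ``similar to the case of the electro-weak theory (although somewhat more cumbersome)'' and quoting the results, which is exactly the strategy you describe of repeating the proof of Proposition \ref{prop:alt_fermion_act_EW} with the enlarged finite space and the fluctuated operator from Proposition \ref{prop:SM_inn_fluc}. One remark: faithfully reading the gauge field off Proposition \ref{prop:SM_inn_fluc}, as you propose, yields the quark-doublet coupling $\la\Psi^q_L \mvert i^t\gamma^\mu(Q_\mu+\tfrac13\Lambda_\mu)\Psi^q_L\ra$, so your computation would actually expose what appears to be a typo in the stated proposition, where this term is printed with $Q_\mu-\Lambda_\mu$.
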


We observe that the Lagrangian calculated above is precisely (the fermionic part of) the usual Lagrangian for the Standard Model, including right-handed neutrinos (but without Majorana masses). 
For the possible inclusion of Majorana masses for the right-handed neutrinos, the procedure is the same as in Section \ref{sec:Majorana}, and we shall not repeat it here.

\let\OLDthebibliography\thebibliography
\renewcommand\thebibliography[1]{
  \OLDthebibliography{JKSS07}
  \setlength{\parskip}{0pt}
  \setlength{\itemsep}{0pt plus 0.3ex}
}


\begin{thebibliography}{{\noopsort{Dungen}}D{\noopsort{Suijlekom}}S13}

\bibitem[Bar07]{Bar07}
J.~W. Barrett, \emph{{A Lorentzian version of the non-commutative geometry of
  the standard model of particle physics}}, J. Math. Phys. \textbf{48} (2007),
  012303.

\bibitem[Bau81]{Baum81}
H.~Baum, \emph{{Spin-Strukturen und Dirac-Operatoren \"uber
  pseudo-Rie\-mann\-schen Mannigfaltigkeiten}}, {Teubner-Texte zur Mathematik},
  vol.~41, Teub\-ner-Verlag, Leipzig, 1981.

\bibitem[Bau94]{Bau94}
\bysame, \emph{A remark on the spectrum of the {D}irac operator on
  pseudo-{R}iemannian spin manifolds}, SFB 288 preprint (1994), no.~136.

\bibitem[B{\noopsort{Dungen}}D14]{BvdD14}
J.~Boeijink and K.~{\noopsort{Dungen}}{van den}~Dungen, \emph{On globally
  non-trivial almost-commutative manifolds}, J. Math. Phys. \textbf{55} (2014),
  no.~10, 103508.

\bibitem[Bog74]{Bognar74}
J.~Bogn\'ar, \emph{Indefinite inner product spaces}, Ergebnisse Mathematik und
  GrenzGeb., Springer, Berlin, 1974.

\bibitem[CC97]{CC97}
A.~H. Chamseddine and A.~Connes, \emph{{The spectral action principle}},
  Commun. Math. Phys. \textbf{186} (1997), 731--750.

\bibitem[CC10]{CC10}
\bysame, \emph{Noncommutative geometry as a framework for unification of all
  fundamental interactions including gravity. {P}art {I}}, Fortsch. Phys.
  \textbf{58} (2010), 553--600.

\bibitem[CCM07]{CCM07}
A.~Chamseddine, A.~Connes, and M.~Marcolli, \emph{{Gravity and the standard
  model with neutrino mixing}}, Adv. Theor. Math. Phys. \textbf{11} (2007),
  991.

\bibitem[CC{\noopsort{Suijlekom}}S13a]{CCvS13b}
A.~H. Chamseddine, A.~Connes, and W.~D. {\noopsort{Suijlekom}}van~Suijlekom,
  \emph{{Beyond the Spectral Standard Model: Emergence of Pati-Salam
  Unification}}, JHEP \textbf{11} (2013), 132.

\bibitem[CC{\noopsort{Suijlekom}}S13b]{CCvS13}
\bysame, \emph{Inner fluctuations in noncommutative geometry without the first
  order condition}, J. Geom. Phys. \textbf{73} (2013), 222--234.

\bibitem[CM07]{CM07}
A.~Connes and M.~Marcolli, \emph{{Noncommutative Geometry, Quantum Fields and
  Motives}}, American Mathematical Society, 2007.

\bibitem[Con94]{Connes94}
A.~Connes, \emph{{Noncommutative Geometry}}, Academic Press, San Diego, CA,
  1994.

\bibitem[Con95]{Con95b}
\bysame, \emph{Noncommutative geometry and reality}, J. Math. Phys. \textbf{36}
  (1995), no.~11, 6194--6231.

\bibitem[Con96]{Con96}
\bysame, \emph{{Gravity coupled with matter and the foundation of
  non-commutative geometry}}, Commun. Math. Phys. \textbf{182} (1996),
  155--176.

\bibitem[Con06]{Con06}
\bysame, \emph{{Noncommutative geometry and the standard model with neutrino
  mixing}}, JHEP \textbf{0611} (2006), 081.

\bibitem[{\noopsort{Dungen}}DR16]{vdDR16}
K.~{\noopsort{Dungen}}{van den}~Dungen and A.~{Rennie}, \emph{Indefinite
  {K}asparov modules and pseudo-{R}iemannian manifolds}, Annales Henri
  Poincar{\'e} (2016), (Published online).

\bibitem[{\noopsort{Dungen}}D{\noopsort{Suijlekom}}S12]{vdDvS12}
K.~{\noopsort{Dungen}}{van den}~Dungen and W.~D.
  {\noopsort{Suijlekom}}van~Suijlekom, \emph{Particle physics from
  almost-commutative spacetimes}, Rev. Math. Phys. \textbf{24} (2012), 1230004.

\bibitem[{\noopsort{Dungen}}D{\noopsort{Suijlekom}}S13]{vdDvS13}
\bysame, \emph{Electrodynamics from noncommutative geometry}, J. Noncommut.
  Geom. \textbf{7} (2013), 433--456.

\bibitem[HR00]{Higson-Roe00}
N.~Higson and J.~Roe, \emph{{Analytic K-Homology}}, Oxford University Press,
  New York, 2000.

\bibitem[ISS04]{ISS04}
B.~Iochum, T.~Sch{\"u}cker, and C.~Stephan, \emph{{On a classification of
  irreducible almost commutative geometries}}, J. Math. Phys. \textbf{45}
  (2004), 5003.

\bibitem[JKSS07]{JKSS07}
J.-H. Jureit, T.~Krajewski, T.~Sch{\"u}cker, and C.~A. Stephan, \emph{{On the
  noncommutative standard model}}, Acta Phys. Polon. \textbf{B38} (2007),
  3181--3202.

\bibitem[{Kas}80]{Kas80b}
G.~G. {Kasparov}, \emph{{The operator K-functor and extensions of
  {$C^{\ast}$}-algebras}}, Izv. Akad. Nauk SSSR \textbf{44} (1980), 571--636.

\bibitem[LM89]{LM89}
H.~Lawson and M.~Michelsohn, \emph{{Spin Geometry}}, Princeton mathematical
  series, Princeton University Press, 1989.

\bibitem[{\noopsort{Nieuwenhuizen}}NW96]{NW96}
P.~{\noopsort{Nieuwenhuizen}}van~Nieuwenhuizen and A.~Waldron, \emph{{On
  Euclidean spinors and Wick rotations}}, Phys.Lett. \textbf{B389} (1996),
  29--36.

\bibitem[PS95]{Peskin-Schroeder95}
M.~Peskin and D.~Schroeder, \emph{{An Introduction to Quantum Field Theory}},
  Westview Press, 1995.

\bibitem[PS06]{PS06}
M.~Paschke and A.~Sitarz, \emph{{Equivariant Lorentzian Spectral Triples}},
  ArXiv e-prints (2006), math--ph/0611029.

\bibitem[Str06]{Str06}
A.~Strohmaier, \emph{On noncommutative and pseudo-{R}iemannian geometry}, J.
  Geom. Phys. \textbf{56} (2006), no.~2, 175--195.

\bibitem[Sui04]{Sui04}
W.~D. {\noopsort{Suijlekom}}van~Suijlekom, \emph{The noncommutative
  {L}orentzian cylinder as an isospectral deformation}, J. Math. Phys.
  \textbf{45} (2004), 537--556.

\end{thebibliography}

\providecommand{\noopsort}[1]{}\providecommand{\vannoopsort}[1]{}
\providecommand{\bysame}{\leavevmode\hbox to3em{\hrulefill}\thinspace}
\providecommand{\MR}{\relax\ifhmode\unskip\space\fi MR }
\providecommand{\MRhref}[2]{%
  \href{http://www.ams.org/mathscinet-getitem?mr=#1}{#2}
}
\providecommand{\href}[2]{#2}

\end{document}